\title[]{Emergent behaviors of Cucker-Smale flocks on the hyperboloid}
\author[Ahn]{Hyunjin Ahn}
\address[Hyunjin Ahn]{\newline Department of Mathematical Sciences\newline Seoul National University, Seoul 08826, Republic of Korea}
\email{yagamelaito@snu.ac.kr}
\author[Ha]{Seung-Yeal Ha}
\address[Seung-Yeal Ha]{\newline Department of Mathematical Sciences and Research Institute of Mathematics \newline Seoul National University, Seoul 08826 and \newline
Korea Institute for Advanced Study, Hoegiro 85, Seoul 02455, Republic of Korea}
\email{syha@snu.ac.kr}
\author[Park]{HanSol Park}
\address[Hansol Park]{\newline Department of Mathematical Sciences\newline Seoul National University, Seoul 08826, Republic of Korea}
\email{hansol960612@snu.ac.kr}
\author[Shim]{Woojoo Shim}
\address[Woojoo Shim]{\newline Department of Mathematical Sciences\newline Seoul National University, Seoul 08826, Republic of Korea}
\email{cosmo.shim@gmail.com }
\newtheorem{theorem}{Theorem}[section]
\newtheorem{lemma}{Lemma}[section]
\newtheorem{proposition}{Proposition}[section]
\newtheorem{remark}{Remark}[section]
\newtheorem{definition}{Definition}[section]
\newcommand{\bx}{\mbox{\boldmath $x$}}
\newcommand{\by}{\mbox{\boldmath $y$}}
\newcommand{\bu}{\mbox{\boldmath $u$}}
\newcommand{\bv}{\mbox{\boldmath $v$}}
\newcommand{\bz}{\mbox{\boldmath $z$}}
\newcommand{\bp}{\mbox{\boldmath $p$}}
\newcommand{\bq}{\mbox{\boldmath $q$}}
\begin{document}

\date{\today}

\subjclass{82C10, 82C22, 35B37} \keywords{Cucker-Smale model, emergence, flocking, hyperbolic space, hyperboloid model, velocity alignment}

\thanks{\textbf{Acknowledgment.} The work of S.-Y. Ha was supported by National Research Foundation of Korea(NRF-2020R1A2C3A01003881). f
The work of H. Park was supported by Basic Science Research Program through the National Research Foundation of Korea(NRF) funded by the Ministry of Education (2019R1I1A1A01059585).}
\begin{abstract}
We study emergent behaviors of  Cucker-Smale(CS) flocks on the hyperboloid $\mathbb{H}^d$ in any dimensions. In a recent work \cite{H-H-K-K-M}, a first-order aggregation model on the hyperboloid was proposed and its emergent dynamics was analyzed in terms of initial configuration and system parameters. In this paper, we are interested in the second-order modeling of Cucker-Smale flocks on the hyperboloid. For this, we derive our second-order model from the abstract CS model on complete and smooth Riemannian manifolds by explicitly calculating the geodesic and parallel transport. Velocity alignment has been shown by combining general {velocity alignment estimates} for the abstract CS model on manifolds and verifications of a priori estimate of second derivative of energy functional. For the two-dimensional case $\mathbb{H}^2$, similar to the recent result in \cite{A-H-S}, asymptotic flocking admits only two types of asymptotic scenarios, either convergence to a rest state or a state lying on the same plane (coplanar state). We also provide several numerical simulations to illustrate an aforementioned dichotomy on the asymptotic dynamics of the hyperboloid CS model on $\mathbb{H}^2$.
\end{abstract}

\maketitle \centerline{\date}


\section{Introduction} \label{sec:1}
\setcounter{equation}{0}
Emergent behaviors of many-particle systems are often observed in nature, e.g., flocking of birds \cite{B-C-C, C-S1, C-S2}, aggregation of bacteria \cite{A-B-F, T-B}, swarming of fish \cite{D-M1, D-M2, D-M3, T-T} and synchronization of fireflies and pacemakers \cite{A-R, B-B, Ku2, Pe, Wi2} etc. For survey articles and books on collective dynamics, we refer to \cite{A-B, A-B-F, C-H-L, D-B1, MT2014, P-R,St, Wi1}. In this paper, we continue the study begun in \cite{H-H-K-K-M} on the collective modeling of many-particle systems on the hyperboloid which is one of mathematical model for the hyperbolic space, and we plan to propose a second-order model for CS flocks on the hyperboloid. In \cite{H-H-K-K-M}, authors introduced a first-order Lohe sphere type model on the hyperboloid by replacing the Euclidean inner product with Minkowski one. Moreover, they derived several frameworks leading to the complete aggregations.  

The CS model \cite{C-S1, C-S2} is a Newton-like particle system describing flocking behaviors of many-particle systems, and it has been extensively investigated from various points of view \cite{C-F-R-T, C-D, C-D1, C-S1, C-S2, DS2019, D-Q, D-F-T,H-K-P-Z, H-K-Rug, H-K-Rug2, H-Liu, H-R, H-T, Ji, L-X, M-T, PV17, P-S}. Although extensive studies for the CS model have been done in the last decade, all aforementioned works were mostly restricted to the CS model on Euclidean space. {Therefore, one might ask how does a particle system interacts with the geometric structure of its ambient space, which is an interesting question from the pure and applied mathematics viewpoint. For definiteness, we are interested in the following questions:} 


\begin{itemize}
\item
Can we generalize the Cucker-Smale model on the Euclidean space to second-order particle models on manifolds? If then, how to formulate the velocity alignment?

\vspace{0.1cm}

\item
For the proposed manifold counterpart of the CS model, can we derive emergent dynamics?
\end{itemize}
Recently, the above questions have been addressed in \cite{H-K-S}, and it has been further extended to the thermodynamic Cucker-Smale(TCS) model by the authors on complete smooth Riemannian manifolds in \cite{A-H-S}.  To fix the idea, we consider a complete and connected Riemannian manifold $M$ with metric tensor $g$, and assume that $(M,g)$ has no boundary. For each $i=1,\cdots,N,$ let $x_i:[0,\infty)\to M$ be a smooth curve representing the position of the $i$-th CS particle on $M$ and $v_i$ be the corresponding velocity of the $i$-th particle. Then, the dynamics of CS particles on $(M,g)$ is governed by the following second-order ODE model \cite{H-K-S}:
\begin{equation}\label{model}
\begin{cases}
\displaystyle {\dot x}_i=v_i,\quad t>0,\quad  i=1,\cdots,N,\\
\displaystyle\nabla_{v_i} v_i=\frac{\kappa}{N}\sum_{j=1}^{N}\psi(x_i, x_j)\left(P_{ij} v_j- v_i\right),\\
\displaystyle(x_i(0), v_i(0))=(x_i^{in}, v_i^{in})\in TM,
\end{cases}
\end{equation}
\noindent where $\nabla$ is the Levi-Civita connection compatible with $(M,g)$, and $P_{ij}$ is the parallel transport along the length minimizing geodesic from $x_j$ to $x_i$, and 
$\kappa$ is a nonnegative coupling strength which is assumed to be strictly positive. Throughout the paper, we call \eqref{model} as the ``{\it abstract manifold CS model}". For a global well-posedness of \eqref{model}, we assume that $\psi$ satisfies positivity, boundedness, symmetry and smoothness conditions:
\begin{align}
\begin{aligned} \label{comm}
& 0 \leq \psi(x, y) = \psi(y, x) \leq \psi_M < \infty, \quad \forall~x,~y \in M, \\
& \psi:M\times M\to\mathbb{R}~\mbox{is a smooth function, and moreover the mapping } \\
&  \{(x_k, v_k)\}_{k=1}^N\mapsto \psi(x_i, x_j)\left(P_{ij} v_j -v_i \right)~\mbox{is smooth for all $i,j=1,\cdots,N$.}
\end{aligned}
\end{align}
\vspace{0.2cm}

\noindent In particular, the smoothness condition on $\psi$ implies that $\psi(x,y)$ has to be zero for each pair $(x,y)\in M\times M$ having more than one length minimizing geodesics, i.e., 
\begin{equation}\label{cut}
\psi(x,y)=0,\quad \forall~y \in \mbox{Cut}(x),
\end{equation}
where Cut($x$) denotes the cut locus of the point $x$, which corresponds to the closure of the set of points $y\in M$ that has more than one length minimizing geodesics joining $x$ and $y$.\\ 

In \cite{H-K-S}, the smoothness condition in \eqref{comm} was not stated clearly and the communication weight $\psi$ was assumed to be a function of geodesic distances between two points, i.e., 
\begin{equation}\label{previous}
\psi(x,y)=\tilde{\psi}(d(x,y)),
\end{equation}
where $d(x,y)$ denotes a geodesic distance between $x$ and $y$  on $M$. In this case, according to the condition \eqref{cut}, the function ${\psi}(x,y)$ has to be zero, whenever $x$ and $y$ satisfy
\[d(x,y)\in \left\{d(p,q):~p\in \mbox{Cut}(q) \right\}. \]
Therefore, authors in \cite{H-K-S} only considered the case in which $(M,g)$ allows a global injectivity radius $R>0$ and the distance between any two points is less than $R$ along the flow. Indeed, the assumption \eqref{previous} is reasonable when the ambient manifold $(M,g)$ has enough symmetry, e.g. homogeneous spaces such as $\mathbb{R}^d, \mathbb{S}^d$ or $\mathbb{H}^d$. In order to overcome this limitation and study the CS model on a generic complete manifold $(M,g)$, the condition \eqref{previous} for communication weight $\psi$ had been discarded in \cite{A-H-S} and replaced by the smoothness condition \eqref{comm} for a generic $\psi$. In what follows, we set a phase point $z_i$:
\[ z_i := (x_i, v_i), \quad i = 1, \cdots, N. \]
Before we move on further, we recall a concept of asymptotic flocking and velocity alignment as follows.
\begin{definition}\label{D1.1}
\emph{\cite{H-K-S}}
Let $Z=(z_1,\cdots, z_N)$ be a global-in-time smooth solution to \eqref{model}. 
\vspace{0.1cm}
\begin{enumerate}
\item 
The configuration $Z$ exhibits (asymptotic) velocity alignment if 
\begin{equation*} \label{alignment}
\lim_{t \to \infty} \|P_{ij} v_j(t)- v_i(t)\|_{x_i(t)}=0,\quad \forall~ i,j=1,\cdots,N. 
\end{equation*}
\item
The configuration $Z$ exhibits (asymptotic) flocking if
\begin{align}
\begin{aligned} \label{flocking}
\sup_{0 \leq t < \infty} \max_{i,j} d(x_i(t),x_j(t))<\infty,\quad \lim_{t \to \infty} \max_{i,j}  \|P_{ij}v_j(t)-v_i(t)\|_{x_i(t)}=0,
\end{aligned}
\end{align}
\end{enumerate}
{where the norm $\|v\|_x$ in above definition is the canonical norm defined in $T_xM$ associated with inner product $g_x(\cdot,\cdot)$.}

\end{definition}

\bigskip

Now, we return to our manifold setting with $M = {\mathbb H}^d$. Consider a smooth solution of system \eqref{model} on $\mathbb{H}^d$ and denote the position configuration as $X:=\{ x_1,\cdots, x_N \}$. For each curve $x_i$, we then write their $(d+1)$-dimensional coordinate expression as $\bx_i=\iota(x_i)$. Similarly, we also consider the coordinate expression of tangent vector $\dot{x}_i$ in $\mathbb{R}^{d+1}$ by using the canonical orthonormal basis of $T\mathbb{R}^{d+1}$:
\begin{equation}\label{tangent}
\bx_i=\iota(x_i),\quad \bv_i:=(v_i^0,\cdots,v_i^d),\quad\mbox{where}\quad d\iota_{x_i}(\dot{x}_i)=\sum_{j=0}^{d}v_i^j\frac{\partial}{\partial x^j}\Big|_{\iota(x_i)}. 
\end{equation} 

 In this setting, the abstract manifold CS model \eqref{model} reduces to the following explicit form:
\begin{equation}\label{model-1}
\begin{cases}
\displaystyle \dot{\bx}_i=\bv_i,\quad t>0,\quad  i=1,\cdots,N,\\
\displaystyle \dot{\bv}_i=||\bv_i||^2{\bx}_i+\frac{\kappa}{N}\sum_{j=1}^{N}\psi(\bx_{i}, \bx_{j} )\left({\bv_j}-{\bv_i}+\frac{\langle \bx_{i}, \bv_{j} \rangle_M }{1-\langle \bx_{i}, \bx_{j} \rangle_M }(\bx_{i}+\bx_{j})\right),\\
\displaystyle(\bx_i(0),\bv_i(0))=(\bx_i^{in},\bv_i^{in})\in \mathbb{R}^{d+1}\times \mathbb{R}^{d+1},\quad i=1,\cdots,N,
\end{cases}
\end{equation}
subject to the initial constraints: 
\begin{equation}\label{model-ini}
\langle \bx_i^{in},\bx_i^{in}\rangle_M =-1,\quad \langle \bx_i^{in},\bv_i^{in}\rangle_M =0,\quad i=1,\cdots,N. 
\end{equation}
{We refer Section \ref{sec:2} and \ref{sec:3} for the definition of $\|\cdot\|,~ \langle\cdot,\cdot\rangle_M$ and a detailed derivation from \eqref{model} to \eqref{model-1}-\eqref{model-ini}.} Throughout the paper, we call system \eqref{model-1}-\eqref{model-ini} as the hyperbolic Cucker-Smale(HCS) model.  \newline

Next, we briefly discuss {our} three main results. First, we show that when the HCS model \eqref{model-1} is restricted on a geodesic, it reduces to the hyperbolic Kuramoto(HK) model introduced in \cite{R-L-W} (see Section \ref{sec:3}):
\[
\dot{\theta}_i=\nu_i+\frac{\kappa}{N}\sum_{j=1}^{N}\sinh(\theta_j-\theta_i),\quad i=1,\cdots,N.
\]
Second, we show that if $\psi$ in $\eqref{model-1}_2$ has a positive lower bound, system \eqref{model-1} exhibits asymptotic velocity alignment in the sense of Definition \ref{D1.1} ({see} Theorem \ref{T4.1}): 
\[\lim_{t \to \infty} \|P_{ij} v_j- v_i\|_{x_i}=0,\quad \forall~ i,j=1,\cdots,N. \]
Third, we derive a dichotomy on the asymptotic patterns of the HCS model on ${\mathbb H}^2$ under the a priori flocking assumption \eqref{flocking}. For this, we introduce an energy functional ${\mathcal E}$:
\begin{equation} \label{energy}
 \mathcal{E}[t]:= \frac{1}{2} \sum_{i=1}^{N}\|v_i\|_{x_i}^2 = \frac{1}{2} \sum_{i=1}^{N}g({\dot x}_i, {\dot x}_i), \quad t \geq 0. 
\end{equation} 
Then, it is known \cite{H-K-S} that $\mathcal{E}[t]$ is monotonically decreasing along the abstract {manifold CS} model \eqref{model}{, which includes the HCS flow \eqref{model-1} as a particular case $M=\mathbb{H}^d$.} Hence, it converges to a nonnegative value asymptotically (see Proposition \ref{P4.1}). Then, depending on the limit, we have the following two cases:
\[ \mbox{either}~ \lim_{t\to\infty}\mathcal{E}[t]=0 \quad \mbox{or} \quad \lim_{t\to\infty}\mathcal{E}[t]>0. \]
For the latter case, via several technical lemmas (Lemma \ref{L4.2} - Lemma \ref{L4.4}) on the hyperbolic geodesic triangles, we show that position configuration becomes coplanar asymptotically (see Theorem \ref{T4.2}).

\bigskip

The rest of this paper is organized as follows. In Section \ref{sec:2}, we study the CS model on {generic Riemannian manifolds} and review basic materials on the hyperbolic space.  In Section \ref{sec:3}, we present the hyperbolic Kuramoto model {and derive it from the HCS model on a geodesic.} {In Section} \ref{sec:4}, we study emergent dynamics of the HCS model, e.g. velocity alignment, asymptotic flocking and dichotomy in asymptotic dynamics and {provide several numerical simulations to compare} them with {our} analytic results. Finally Section \ref{sec:5} is devoted to a brief summary of our main results and some discussion on the remaining issues to be explored in a future work.

\section{Preliminaries} \label{sec:2}
\setcounter{equation}{0}
In this section, we review theoretical minimum for the hyperboloid model as a hyperbolic space and a priori velocity alignment estimate for the abstract CS model on complete and smooth Riemannian manifolds, and then provide explicit CS model on the hyperboloid. 
\subsection{The hyperbolic space $\mathbb{H}^d$}\label{sec:2.1}
\vspace{0.1cm}

The hyperbolic $d$-space, denoted by $\mathbb{H}^d$, is a {unique} $d$-dimensional Riemannian manifold which is complete, connected, simply-connected with constant sectional curvature $-1$ {up to isometry}. In the sequel, we characterize the hyperbolic space $\mathbb{H}^d$ for the hyperboloid model with some geometric properties. In the hyperboloid model,  $\mathbb{H}^d$ is given by
\vspace{0.1cm}
\begin{equation}\label{set}
\mathbb{H}^d=\left\{(x^0, x^1, \cdots, x^d)\in\mathbb{R}^{d+1}: -(x^0)^2+(x^1)^2+\cdots+(x^d)^2=-1, \quad x^0>0\right\},
\end{equation}
{where} the smooth structure of $\mathbb{H}^d$ can be described by a single chart $(\mathbb{H}^d, \phi)$ for the homeomorphism $\phi$:
\begin{align*}
\phi&:\mathbb{H}^d\rightarrow\mathbb{R}^d,\quad \phi(x^0, x^1, \cdots, x^d)=(x^1, x^2, \cdots, x^d) \quad \mbox{and} \\
\phi^{-1}&:\mathbb{R}^d\rightarrow\mathbb{H}^d,\quad \phi^{-1}(x^1, \cdots, x^d)=\left(\sqrt{1+(x^1)^2+\cdots+(x^d)^2},x^1, x^2, \cdots, x^d\right).
\end{align*}
Then, $\mathbb{H}^d$ is connected and simply connected since it is homeomorphic to $\mathbb{R}^d$.  In the following proposition, we state {a well-known fact} for the geodesic of $\mathbb{H}^d$, which guarantees the completeness of $\mathbb{H}^d$. To make the notion clear, we now denote $\iota$ as the natural injection from $\mathbb{H}^d$ to $\mathbb{R}^{d+1}$.
\begin{proposition} \label{P2.1}
	Let $\gamma:\mathbb{R}\to\mathbb{H}^d$ be a geodesic curve on $\mathbb{H}^d$ parametrized by arc length. Then, $\Gamma:=\iota\circ\gamma$ can be written as 
	\begin{align*}
	\begin{aligned}
	\Gamma(s)= \cosh(s)\cdot\Gamma(0) + \sinh(s)\cdot\dot{\Gamma}(0),\quad \forall~s\in\mathbb{R}.  
	\end{aligned}
	\end{align*}
\end{proposition}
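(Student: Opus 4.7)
The plan is to derive and solve an ordinary differential equation satisfied by $\Gamma=\iota\circ\gamma$ in the ambient Minkowski space $(\mathbb{R}^{d+1},\langle\cdot,\cdot\rangle_M)$, exploiting the fact that $\mathbb{H}^d$ is an embedded Riemannian submanifold of a flat pseudo-Riemannian ambient space. The key structural observation is that the tangent space to $\mathbb{H}^d$ at a point $p$ is
\[ T_p\mathbb{H}^d=\{w\in\mathbb{R}^{d+1}:\langle p,w\rangle_M=0\}, \]
so the Minkowski-orthogonal complement of $T_p\mathbb{H}^d$ in $\mathbb{R}^{d+1}$ is spanned by $p$ itself (note $\langle p,p\rangle_M=-1\neq 0$, so the splitting is well defined). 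Because the Minkowski metric on $\mathbb{R}^{d+1}$ is flat and restricts to the Riemannian metric of $\mathbb{H}^d$, the Levi-Civita connection on $\mathbb{H}^d$ coincides with the tangential projection of the ambient directional derivative.

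With this setup, I would proceed as follows. First, from $\langle\Gamma(s),\Gamma(s)\rangle_M=-1$, differentiate twice with respect to $s$ to obtain $\langle\Gamma,\dot\Gamma\rangle_M=0$ and $\langle\dot\Gamma,\dot\Gamma\rangle_M+\langle\Gamma,\ddot\Gamma\rangle_M=0$; since $\gamma$ is parametrized by arc length, $\langle\dot\Gamma,\dot\Gamma\rangle_M=1$, hence
\[ \langle\Gamma,\ddot\Gamma\rangle_M=-1. \]
Next, the geodesic equation $\nabla_{\dot\gamma}\dot\gamma=0$ translates, via the ambient/submanifold relation above, to the statement that $\ddot\Gamma(s)$ is Minkowski-normal to $T_{\gamma(s)}\mathbb{H}^d$, so $\ddot\Gamma(s)=\lambda(s)\,\Gamma(s)$ for some scalar function $\lambda$. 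Pairing this identity with $\Gamma$ in the Minkowski sense and using $\langle\Gamma,\Gamma\rangle_M=-1$ together with the previous computation gives $\lambda(s)\equiv 1$, so $\Gamma$ satisfies the linear ODE
\[ \ddot\Gamma(s)=\Gamma(s),\qquad s\in\mathbb{R}. \]

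Finally, solving this componentwise (each of the $d+1$ coordinates is an independent real solution of $\ddot f=f$), the general solution is $\Gamma(s)=\cosh(s)\,A+\sinh(s)\,B$ with vectors $A,B\in\mathbb{R}^{d+1}$; evaluating at $s=0$ gives $A=\Gamma(0)$ and $B=\dot\Gamma(0)$, which yields the claimed formula. For full rigor one should also check consistency, namely $\langle A,A\rangle_M=-1$ and $\langle A,B\rangle_M=0$ (so that indeed $\Gamma(s)\in\iota(\mathbb{H}^d)$ and $\dot\Gamma(s)\in T_{\gamma(s)}\mathbb{H}^d$ for all $s$), but these are immediate from the initial hyperbolic constraint and its first derivative. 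I do not foresee a genuine obstacle: the only conceptual subtlety is the step identifying the Levi-Civita connection of $\mathbb{H}^d$ with the tangential projection of the flat Minkowski connection, which is the standard Gauss formula for a non-degenerate embedded submanifold of a semi-Riemannian space, and the rest is a short ODE computation.
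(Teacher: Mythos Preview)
Your proof is correct, but the route differs from the paper's. The paper (Appendix~A) works entirely intrinsically in the global chart $\phi:\mathbb{H}^d\to\mathbb{R}^d$: it writes out the metric components $g_{ij}$, inverts them, computes the Christoffel symbols $\Gamma_{ij}^k(u)=-u^k g_{ij}(u)$ by hand, and then solves the resulting geodesic equation $\ddot u^k - u^k g(\dot\gamma,\dot\gamma)=0$ for the last $d$ coordinates. Afterwards it recovers the $0$-th coordinate $u^0(s)=\sqrt{1+\sum_m (u^m(s))^2}$ through a page-long algebraic verification. By contrast, you work extrinsically: invoking the Gauss formula for a non-degenerate submanifold of Minkowski space, you obtain the single vector ODE $\ddot\Gamma=\Gamma$ for all $d+1$ components simultaneously and solve it in one line. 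Your argument is shorter and more conceptual; the paper's coordinate computation, while heavier, has the side benefit of producing the explicit Christoffel symbols \eqref{Christoffel}, which it reuses in Appendix~B when deriving the covariant-derivative and parallel-transport formulas of Proposition~\ref{P2.4}.
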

\begin{proof}
Although this is well-known fact for $\mathbb{H}^d$, we present its proof in Appendix \ref{App-A} for completeness.
\end{proof}
\begin{remark}
	In the unit sphere $\mathbb{S}^d$, any geodesic $\gamma:\mathbb{R}\to \mathbb{S}^d\subset \mathbb{R}^{d+1}$ satisfies
		\begin{align*}
	\begin{aligned}
	\Gamma(s)= \cos(s)\cdot\Gamma(0) + \sin(s)\cdot\dot{\Gamma}(0),\quad \forall~s\in\mathbb{R}.  
	\end{aligned}
	\end{align*}

\end{remark}

Next, we discuss some properties of a geodesic curve on ${\mathbb{H}^d}$ and parallel transports along length minimizing geodesics on ${\mathbb{H}^d}$ by using Minkowski bilinear form. We first begin with definition of Minkowski bilinear form, denoted by $\langle\cdot,\cdot\rangle_M $. 
\vspace{0.2cm}
\begin{definition}
	The Minkowski bilinear form $\langle\cdot,\cdot\rangle_M :\mathbb{R}^{d+1}\times \mathbb{R}^{d+1}\to\mathbb{R}$ is defined as
	\[ \langle \bm{x},\bm{y} \rangle_M  := -x^0y^0+\sum_{k=1}^{d}x^{k}y^{k} \in \mathbb{R}, \]
where
\[ \bm{x}=~(x^0, x^1, \cdots, x^d)\in \mathbb{R}^{d+1}, \quad \bm{y}=(y^0, y^1, \cdots, y^d)\in \mathbb{R}^{d+1}. \]
	
\end{definition}
This bilinear form can be extensively used to simplify several notions in the hyperboloid model. Among them, we here introduce  some well-known properties of Minkowski bilinear form in the following three propositions to be crucially related to analysis in later sections. The first one gives some defining relations of points and tangent vectors on the hyperboloid model in terms of Minkowski bilinear form. 
\vspace{0.2cm}
\begin{proposition}\label{P2.2}
	Let $\bx=(x^0,\cdots,x^d),~\bu=(u^0,\cdots,u^d)$ and $\bv=(v^0,\cdots,v^d)$ be arbitrary vectors in $\mathbb{R}^{d+1}$. Then, the following assertions hold:
	\vspace{0.2cm}
	\begin{enumerate}
		\item $\bm{x}$ lies on $\iota({\mathbb{H}^d})$ if and only if 
		\[\langle \bm{x},\bm{x} \rangle_M =-1\quad\mbox{and}\quad x^0>0.\]
		\item $\sum_{i=0}^d v^i\frac{\partial}{\partial{x^i}}\big|_{\bx}$ is contained in $d\iota_x \left(T_{x}\mathbb{H}^d\right)$ if and only if 
		\[\bx=\iota(x)\quad \mbox{and}\quad \langle \bx,\bm{v} \rangle_M =0,\]
		 where $\left\{\frac{\partial}{\partial{x^0}}\big|_{\bx},\cdots, \frac{\partial}{\partial{x^d}}\big|_{\bx}\right\}$ is the canonical orthomormal basis of $T_{\bx}\mathbb{R}^{d+1}$ at point $\bx$.
		\item Let $x$ be a point in $\mathbb{H}^d$ and $u,v$ be two tangent vectors on $\mathbb{H}^d$ at $x$. If $\bx,\bu$ and $\bv$ satisfy \[\bm{x}=\iota(x)\quad \mbox{and} \quad\sum_{i=0}^d u^i\frac{\partial}{\partial{x^i}}\Big|_{\bx}=d\iota_x(u),~~\,\, \sum_{i=0}^d v^i\frac{\partial}{\partial{x^i}}\Big|_{\bx}=d\iota_x(v),\]  we have an equivalence between $\langle\cdot,\cdot\rangle_M $ and $g(\cdot,\cdot)$:
		\begin{equation}\label{identity}
		\langle \bm{u},\bm{v} \rangle_M =g\left({u},{v}\right). 
		\end{equation} 
	\end{enumerate}
\end{proposition}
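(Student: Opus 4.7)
The plan is: Part (1) is just a restatement of the defining equations of $\mathbb{H}^d$ in \eqref{set} using the Minkowski notation; nothing to prove beyond reading off the definition, since $-(x^0)^2+\sum_{k=1}^d(x^k)^2 = \langle \bm{x},\bm{x}\rangle_M$. For Part (2), I would view $\iota(\mathbb{H}^d)$ as a connected component of the regular level set $F^{-1}(-1)$ for the smooth function $F:\mathbb{R}^{d+1}\to\mathbb{R}$, $F(\bm{x}) = \langle \bm{x},\bm{x}\rangle_M$. A direct computation gives $dF_{\bm{x}}(\bm{v}) = 2\langle \bm{x},\bm{v}\rangle_M$, and at any $\bm{x}$ with $\langle \bm{x},\bm{x}\rangle_M = -1$ this linear functional is nonzero by non-degeneracy of the Minkowski form; hence $F^{-1}(-1)$ is a smooth hypersurface whose tangent space at $\bm{x}$ is the $d$-dimensional subspace $\{\bm{v}\in\mathbb{R}^{d+1}:\langle \bm{x},\bm{v}\rangle_M = 0\}$. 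The inclusion $d\iota_x(T_x\mathbb{H}^d)\subset \{\bm{v}:\langle \bm{x},\bm{v}\rangle_M = 0\}$ follows by picking any smooth curve $\alpha:(-\varepsilon,\varepsilon)\to\mathbb{H}^d$ representing a given tangent vector and differentiating the identity $\langle \iota\circ\alpha,\iota\circ\alpha\rangle_M \equiv -1$ at $0$; equality then comes from matching dimensions.

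For Part (3), I would invoke the standard construction of the hyperboloid model: the Riemannian metric $g$ on $\mathbb{H}^d$ is defined as the pullback via $\iota$ of the Minkowski bilinear form restricted to the tangent bundle, i.e., $g_x(u,v) := \langle d\iota_x(u),d\iota_x(v)\rangle_M$ for all $u,v\in T_x\mathbb{H}^d$. With this convention \eqref{identity} is tautological. To confirm that this pullback actually gives a Riemannian metric (and not merely a pseudo-Riemannian one), I would observe that it is positive definite on each $T_x\mathbb{H}^d$: by Part (2), $d\iota_x(T_x\mathbb{H}^d)$ is the Minkowski-orthogonal complement of the timelike vector $\bm{x} = \iota(x)$, and $\langle \cdot,\cdot\rangle_M$ is positive definite on the Minkowski-orthogonal complement of any timelike vector (an immediate consequence of the signature $(-,+,\ldots,+)$ together with Sylvester's law). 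Substituting the coordinate expressions $d\iota_x(u) = \sum u^i \partial/\partial x^i|_{\bm{x}}$ and $d\iota_x(v) = \sum v^i \partial/\partial x^i|_{\bm{x}}$ then yields \eqref{identity} at once.

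The whole proposition is essentially bookkeeping once the hyperboloid model is set up, and I do not anticipate any serious technical obstacle. The only mild subtlety is reconciling this explicit pullback metric with the intrinsic characterization of $\mathbb{H}^d$ as the unique-up-to-isometry complete, connected, simply-connected Riemannian $d$-manifold of constant sectional curvature $-1$ invoked at the start of Section \ref{sec:2.1}. This reconciliation is standard and can be verified a posteriori by computing the sectional curvatures of the pullback metric via the Gauss equation for the hypersurface $\iota(\mathbb{H}^d)$ inside the Minkowski space $(\mathbb{R}^{d+1},\langle\cdot,\cdot\rangle_M)$; for the purposes of proving \eqref{identity}, however, I am content to adopt the pullback formula as the working definition of $g$ in the hyperboloid model.
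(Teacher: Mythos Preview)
Your treatment of parts (1) and (2) matches the paper's: (1) is the defining equation, and (2) follows by differentiating the constraint $\langle\bm{x},\bm{x}\rangle_M\equiv -1$ along curves (the paper phrases this tersely as ``taking derivative of (1)''; your regular-level-set formulation is the same idea with more detail and the dimension count made explicit).

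For part (3) you and the paper take different routes. The paper works from an explicit coordinate description of the metric, namely $g_{ij}(u) = \delta_{ij} - u^iu^j\big/\bigl(1+\sum_m (u^m)^2\bigr)$ in the chart $\phi$ (this formula is laid down in Appendix~\ref{App-A}), and proves \eqref{identity} by direct computation on the coordinate basis $\partial_i = d\phi^{-1}\bigl(\partial/\partial x^i\big|_{\phi(x)}\bigr)$: it writes out $d\iota_x(\partial_i)$ and checks that $\langle d\iota_x(\partial_i), d\iota_x(\partial_j)\rangle_M$ reproduces $g_{ij}$. You instead adopt the pullback $\iota^*\langle\cdot,\cdot\rangle_M$ as the \emph{definition} of $g$, which renders \eqref{identity} tautological, and supply a separate signature argument for positive-definiteness. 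Both approaches are correct. The paper's buys immediate access to the coordinate components $g_{ij}$, which are used downstream to compute Christoffel symbols and the geodesic equation; yours is conceptually cleaner but would oblige you to derive that coordinate formula afterwards (or reorganize the exposition) before those later computations can proceed.
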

\vspace{0.1cm}
\begin{proof}
	The first assertion immediately follows from definition \eqref{set}, and we deduce the second one by taking derivative of $(1)$. Therefore, we only present a proof of $(3)$. For $(3)$, it suffices to verify \eqref{identity} for the basis 
	\[\left\{\partial_1,\cdots,\partial_d\right\}:=\left\{d\phi^{-1}\left(\frac{\partial}{\partial x^1}\Big|_{\phi(x)}\right),\cdots,d\phi^{-1}\left(\frac{\partial}{\partial x^d}\Big|_{\phi(x)}\right) \right\}, \]
	since $g(\cdot,\cdot)$ and $\langle\cdot,\cdot\rangle_M $ are both bilinear. Then, we use
	\[ 
	d\iota_x(\partial_i) =d(\iota_x\circ\phi^{-1})\left(\frac{\partial}{\partial x^i}\Big|_{\phi(x)}\right)\\
	=\frac{x^i}{\sqrt{1+(x^1)^2+\cdots+(x^d)^2}}\frac{\partial}{\partial x^0}\Big|_{\bx}+\frac{\partial}{\partial x^i}\Big|_{\bx} \]
	to deduce
	\[
	\begin{aligned}
	g(\partial_i,\partial_j) &=g_{ij}(\phi(x)) =\delta_{ij}-\frac{x^ix^j}{1+\sum_{m=1}^d(x^m)^2}\\
	&=\left\langle \frac{x^i}{\sqrt{1+\sum_{m=1}^d(x^m)^2}}e_0+e_i, \frac{x^j}{\sqrt{1+\sum_{m=1}^d(x^m)^2}}e_0+e_j\right\rangle_M ,
	\end{aligned} \]
	where $e_k=(0,\cdots,0,1,0,\cdots,0)$ is a unit vector which has a unique nonzero value at $(k+1)$-th component.
\end{proof}
\vspace{0.2cm}
Meanwhile, the bilinear form $\langle\cdot,\cdot\rangle_M $ can be also used to represent the geodesic distance between points and show the uniqueness of geodesic joining arbitray pair of points. 
\vspace{0.2cm}
\begin{proposition} \label{P2.3}
	For every $p\neq q\in\mathbb{H}^d$, there exists a unique geodesic joining $p$ and $q$ up to reparametrization. Moreover, their geodesic distance $d(p,q)$ satisfies
    \[
    \cosh d({p},{q})= -\langle \bp,\bq \rangle_M ,\quad \bp:=\iota({p}),~\bq=\iota({q}).
    \]    
\end{proposition}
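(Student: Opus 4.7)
The plan is to derive both existence and uniqueness of the connecting geodesic directly from the closed-form description in Proposition \ref{P2.1}, and then identify its arc length with $d(p,q)$ via completeness of $\mathbb{H}^d$.

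First I would prove the reverse Cauchy--Schwarz inequality $-\langle\bp,\bq\rangle_M \geq 1$ for all $p,q\in\mathbb{H}^d$, with equality precisely when $p=q$. Writing out the zeroth and spatial components of $\bp,\bq$ using the characterization in Proposition \ref{P2.2}(1) and combining the ordinary Cauchy--Schwarz inequality with the elementary scalar bound $\sqrt{(1+a^2)(1+b^2)} \geq 1+ab$ (valid for $a,b \geq 0$) yields the claim. This legitimizes setting $L:=\cosh^{-1}(-\langle\bp,\bq\rangle_M)>0$ whenever $p\neq q$.

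Next, for existence I would define
\[
\bv := \frac{\bq + \langle\bp,\bq\rangle_M\,\bp}{\sinh L},
\]
and use $\langle\bp,\bp\rangle_M=\langle\bq,\bq\rangle_M=-1$ together with $\cosh^2 L - 1 = \sinh^2 L$ to check that $\langle\bp,\bv\rangle_M=0$ and $\langle\bv,\bv\rangle_M=1$. By Proposition \ref{P2.2}, $\bv$ represents a unit tangent vector at $p$, and by Proposition \ref{P2.1} the curve $\Gamma(s):=\cosh(s)\bp+\sinh(s)\bv$ is an arc-length geodesic satisfying $\Gamma(0)=\bp$ and $\Gamma(L)=\bq$. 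For uniqueness, any competing geodesic joining $p$ to $q$, reparametrized by arc length, is forced by Proposition \ref{P2.1} into the form $\cosh(s)\bp+\sinh(s)\bv'$ for some unit tangent $\bv'$ at $p$; pairing the identity $\bq = \cosh(L')\bp + \sinh(L')\bv'$ with $\bp$ in the Minkowski form immediately yields $\langle\bp,\bq\rangle_M = -\cosh(L')$, so $L'=L$ and then $\bv'=\bv$.

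Finally, completeness of $\mathbb{H}^d$ together with Hopf--Rinow guarantees that some length-minimizing geodesic from $p$ to $q$ exists; by the uniqueness just established, it must be a reparametrization of $\Gamma$, whose arc length is $L$. Therefore $d(p,q)=L$, and the formula $\cosh d(p,q)=-\langle\bp,\bq\rangle_M$ follows. The main subtlety I anticipate is precisely this last step: Proposition \ref{P2.1} by itself only certifies that $\Gamma$ is a stationary curve of the length functional, and promoting ``unique geodesic'' to ``length minimizer'' genuinely relies on completeness of $\mathbb{H}^d$ (equivalently, on the fact that the Riemannian exponential map at $p$ is a global diffeomorphism onto $\mathbb{H}^d$, which itself can be read off from Proposition \ref{P2.1}).
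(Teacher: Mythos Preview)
Your proof is correct and follows essentially the same route as the paper: both use the closed form from Proposition~\ref{P2.1} and the pairing $\langle\bp,\bq\rangle_M=-\cosh(s)$ to pin down the arc length and the initial velocity uniquely. You are in fact more thorough than the paper, which argues only uniqueness and leaves existence and the identification of arc length with $d(p,q)$ (via completeness/Hopf--Rinow) implicit; your explicit reverse Cauchy--Schwarz step and the construction of $\bv$ fill exactly those gaps.
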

\begin{proof}
	Suppose $\gamma_1$ and $\gamma_2$ are two geodesics from $p$ to $q$ parametrized by arclength. Then, we use Proposition \ref{P2.1} to write
	\vspace{0.2cm}
	\begin{equation}\label{twogeods}
	\begin{aligned}
	\Gamma_i(s_i)&= \cosh(s_i)\cdot\Gamma_i(0) + \sinh(s_i)\cdot\dot{\Gamma}_i(0),\\
	\gamma_i(0)&=p,\quad \gamma_i(s_i)=q,\quad \Gamma_i:=\iota\circ\gamma_i, \quad i=1,2,
	\end{aligned}
	\end{equation} 
	\vspace{0.2cm}
	
	\noindent where $s_1,s_2>0$ are the lengths of geodesics $\gamma_1$ and $\gamma_2$ from $p$ to $q$, respectively. Now, we substitute the representations \eqref{twogeods} of $p$ and $q$ to $\langle\bp,\bq\rangle_M $ to obtain
	\vspace{0.2cm}
	\[\begin{aligned}
	\langle\bp,\bq\rangle_M &=\langle\Gamma_i(0),\Gamma_i(s_i) \rangle_M \\
	&= \cosh(s_i)\langle\Gamma_i(0),\Gamma_i(0) \rangle_M +\sinh(s_i)\langle\Gamma_i(0),\dot{\Gamma}_i(0) \rangle_M \\
	&=-\cosh(s_i),\quad i=1,2,
	\end{aligned} \]
\noindent where we used Proposition \ref{P2.2} in the last equality.  \newline
	
\noindent	Therefore, we deduce $s_1=s_2$ and 
	\vspace{0.2cm}
	\[\dot{\Gamma}_1(0)=\frac{\bq-\cosh(s_1)\bp}{\sinh(s_1)}=\frac{\bq-\cosh(s_2)\bp}{\sinh(s_2)}=\dot{\Gamma}_2(0), \]
	\vspace{0.2cm}
	\noindent which is equivalent to $\gamma_1=\gamma_2$.
\end{proof}
\begin{remark}\label{R2.2}
By the result of Proposition \ref{P2.3}, the hyperbolic space $\mathbb{H}^d$ admits a unique geodesic for each pair of points. Therefore, the mapping 
	\[\{(x_k, v_k)\}_{k=1}^N\mapsto \psi(x_i,x_j)\left(P_{ij} v_j- v_i \right) \]
	is smooth if and only if $\psi$ is smooth, and therefore any nonnegative, bounded, symmetric and smooth $\psi(\cdot,\cdot)$ can be used in the hyperbolic Cucker-Smale model \eqref{model-1}-\eqref{model-ini}. In particular, $\psi$ may have a positive lower bound or it is even possible to take $\psi$ as a constant without any violation of well-posedness.
\end{remark}
Finally, Minkowski bilinear form $\langle\cdot,\cdot\rangle_M $ can be used to write down the covariant derivative $\nabla_{\dot{x}}\dot{x}$ and parallel transport along geodesics explicitly, which is necessary to analyze \eqref{model} more comprehensively.

\begin{proposition} \label{P2.4}
	Let ${\mathbb{H}^d}$ and  $\nabla$ be the hyperbolic space represented by the hyperboloid model and the Levi-Civita connection of $({\mathbb{H}^d},g)$, respectively. Then, the following assertions hold.
	\begin{enumerate}
		\item For any smooth curve $x:\mathbb{R}\to \mathbb{H}^d$, the covariant derivative of tangent vector field $\dot x$ along $x$ satisfies
		\vspace{0.2cm}
		\begin{equation}\label{covariant}
		\begin{aligned}
		\nabla_{\dot{\bx}}\dot{\bx}=\ddot{\bx}-\langle
		\dot{\bx},\dot{\bx}\rangle_M \bx=\ddot{\bx}+\langle
		\ddot{\bx},\bx\rangle_M \bx,
		\end{aligned}
		\end{equation}
		\vspace{0.1cm}
		 
		\noindent where $\bx=\iota\circ x=(x^0,\cdots,x^d)$ is a curve in $\mathbb{R}^{d+1}$ and 
		\[\nabla_{\dot{\bx}}\dot{\bx}:=(a^0,\cdots,a^d),\quad d\iota_x(\nabla_{\dot{x}}\dot{x})=\sum_{i=0}^{d}a^i\frac{\partial}{\partial x^i}\Big|_{\bx}. \]
		\item Let  $p$ and $q$ be two distinct points in $\mathbb{H}^d$, and $\gamma$ be the geodesic from $p$ to $q$. If $\bu$ is the parallel transport of $\bv$ from $\bp = \iota(p)$ to $\bq = \iota(q)$ along $\gamma$, we have
		\vspace{0.2cm}
		\begin{equation}\label{parallel}
	\bu=\bv+\frac{\langle \bv, \bq \rangle_M }{1-\langle \bp,\bq\rangle_M }(\bp+\bq),
		\end{equation}
		\vspace{0.1cm}
		
		\noindent where $\bv=(v^0,\cdots,v^d)$ and $\bu=(u^0,\cdots,u^d)$ are given by the relations:
		\[\begin{aligned}
		d\iota_p(v)=\sum_{i=0}^{d}v^i\frac{\partial}{\partial x^i}\Big|_{\bp},\quad  d\iota_q(u)=\sum_{i=0}^{d}u^i\frac{\partial}{\partial x^i}\Big|_{\bq}.
		\end{aligned} \]
	\end{enumerate}
	\begin{proof}
	Since the proof is rather lengthy, we leave it in Appendix \ref{App-B}.
	\end{proof}
\end{proposition}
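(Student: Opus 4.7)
The plan is to prove the two parts sequentially by exploiting the ambient description $\iota(\mathbb{H}^d)\subset \mathbb{R}^{d+1}$ together with standard submanifold theory for nondegenerate hypersurfaces in a pseudo-Riemannian ambient space.

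For part (1), I would invoke the general principle that the Levi-Civita connection on a nondegenerate hypersurface of a flat pseudo-Riemannian space is the tangential projection of the ambient covariant derivative. In our case the ambient $\mathbb{R}^{d+1}$ with $\langle\cdot,\cdot\rangle_M$ has vanishing Christoffel symbols, so the ambient derivative of $\dot{\bx}$ along $\bx$ is simply $\ddot{\bx}$. By Proposition \ref{P2.2}(2), $d\iota_x(T_x\mathbb{H}^d)$ is the $\langle\cdot,\cdot\rangle_M$-orthogonal complement of $\bx$, and since $\langle\bx,\bx\rangle_M=-1\neq 0$ the normal/tangent decomposition is unambiguous. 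The normal part of $\ddot{\bx}$ is $\frac{\langle\ddot{\bx},\bx\rangle_M}{\langle\bx,\bx\rangle_M}\bx=-\langle\ddot{\bx},\bx\rangle_M\bx$, so subtraction yields the claimed $\ddot{\bx}+\langle\ddot{\bx},\bx\rangle_M\bx$. Differentiating $\langle\bx,\bx\rangle_M=-1$ twice gives $\langle\ddot{\bx},\bx\rangle_M=-\langle\dot{\bx},\dot{\bx}\rangle_M$, which is exactly the asserted alternative form. As a sanity check, the formula makes the geodesic equation $\nabla_{\dot{\bx}}\dot{\bx}=0$ collapse to $\ddot{\bx}=\langle\dot{\bx},\dot{\bx}\rangle_M\bx$, and Proposition \ref{P2.1} verifies this with $\ddot{\Gamma}=\Gamma$ along unit-speed geodesics.

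For part (2), I would set up the parallel transport equation $\nabla_{\dot{\Gamma}}\bu=0$, which by part (1) becomes the ambient ODE
\[ \dot{\bu}+\langle\dot{\bu},\Gamma\rangle_M\Gamma=0, \]
equivalently, $\dot{\bu}$ is proportional to $\Gamma$. Parametrizing the unique length-minimizing geodesic by arclength via Proposition \ref{P2.1}, $\Gamma(s)=\cosh(s)\bp+\sinh(s)\dot{\Gamma}(0)$ with $L=d(p,q)$ and $\dot{\Gamma}(0)=(\bq-\cosh(L)\bp)/\sinh(L)$, I would first observe that $\frac{d}{ds}\langle\bu,\dot{\Gamma}\rangle_M=\langle\dot{\bu},\dot{\Gamma}\rangle_M+\langle\bu,\Gamma\rangle_M=0$ since $\dot{\bu}\parallel\Gamma$ and $\bu\perp\Gamma$. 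Hence the proportionality constant $\lambda:=\langle\bu,\dot{\Gamma}\rangle_M$ is conserved, and at $s=0$ equals $\langle\bv,\dot{\Gamma}(0)\rangle_M=\langle\bv,\bq\rangle_M/\sinh(L)$ because $\langle\bv,\bp\rangle_M=0$. Integrating $\dot{\bu}=\lambda\Gamma$ then yields $\bu=\bv+\lambda[\sinh(L)\bp+(\cosh(L)-1)\dot{\Gamma}(0)]$.

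The final step is a routine but slightly delicate simplification: substituting the formula for $\dot{\Gamma}(0)$ and using $\sinh^2(L)=(\cosh(L)-1)(\cosh(L)+1)$ with $\cosh(L)=-\langle\bp,\bq\rangle_M$ by Proposition \ref{P2.3}, the bracket collapses to $\frac{\cosh(L)-1}{\sinh(L)}(\bp+\bq)$ and the coefficient $\lambda\cdot\frac{\cosh(L)-1}{\sinh(L)}$ reduces to $\frac{\langle\bv,\bq\rangle_M}{1-\langle\bp,\bq\rangle_M}$, producing the asserted formula. The main obstacle I anticipate is less the algebra (which is short once organized) and more the justification for the projection-formula identification of the Levi-Civita connection in part (1); I would either cite standard submanifold theory or, alternatively, verify torsion-freeness and metric compatibility of the projected connection directly, the latter being a short computation using $\langle\bx,\dot{\bx}\rangle_M=0$ along any curve on $\mathbb{H}^d$.
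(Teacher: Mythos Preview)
Your proposal is correct, but it follows a different route from the paper's proof in Appendix~B.

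For part~(1), the paper works entirely in the chart $\phi:\mathbb{H}^d\to\mathbb{R}^d$ and uses the explicit Christoffel symbols $\Gamma^k_{ij}(u)=-u^k g_{ij}(u)$ computed in Appendix~A; from the coordinate formula $a^k=\ddot{u}^k+\Gamma^k_{ij}\dot{u}^i\dot{u}^j$ it obtains $a^k=\ddot{x}^k-\langle\dot{\bx},\dot{\bx}\rangle_M x^k$ for $k=1,\dots,d$, and then recovers the $0$-th component by imposing $\langle\mathbf{a},\bx\rangle_M=0$. Your argument bypasses all of this by invoking the tangential-projection description of the induced Levi-Civita connection on a nondegenerate hypersurface of $(\mathbb{R}^{d+1},\langle\cdot,\cdot\rangle_M)$. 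This is cleaner and coordinate-free, at the cost of either citing the pseudo-Riemannian Gauss formula or checking metric compatibility and torsion-freeness of the projected connection by hand (both short). One small point: part~(1) as stated covers only $\nabla_{\dot{\bx}}\dot{\bx}$, whereas in part~(2) you need the analogous projection formula for an arbitrary vector field $\bu$ along $\Gamma$; your submanifold argument gives this immediately, but you should say so explicitly.

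For part~(2), the paper takes the \emph{verify} route: it writes down the candidate $P\bv=\bv+\frac{\langle\bv,\Gamma(s)\rangle_M}{1-\langle\bp,\Gamma(s)\rangle_M}(\bp+\Gamma(s))$, checks that $\langle P\bv,\Gamma(s)\rangle_M=0$, that $\frac{d}{ds}P\bv$ is a scalar multiple of $\Gamma(s)$, and that $\langle P\bv,\dot{\Gamma}(s)\rangle_M$ is conserved, then appeals to ODE uniqueness. You instead \emph{derive} the formula: from $\dot{\bu}\parallel\Gamma$ and the conservation of $\lambda=\langle\bu,\dot{\Gamma}\rangle_M$ you integrate $\dot{\bu}=\lambda\Gamma$ explicitly and simplify. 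Both arguments rest on the same two observations (tangency and conservation of $\lambda$), but yours explains where the formula comes from, while the paper's is quicker once the answer is known.
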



\section{{From HCS to HK model}} \label{sec:3}
In this section, we provide an explicit representation for the CS model on the hyperboloid.
{Then, we {recall} the hyperbolic Kuramoto model \cite{R-L-W} and  provide a rigorous derivation of HK model from the HCS model on a geodesic. }  \newline

\noindent  Recall that the $(d+1)$-dimensional coordinate expression of $(x_i,v_i)\in TM$ is given by
\begin{equation*}
\bx_i=\iota(x_i),\quad \bv_i:=(v_i^0,\cdots,v_i^d),\quad\mbox{where}\quad d\iota_{x_i}(\dot{x}_i)=\sum_{j=0}^{d}v_i^j\frac{\partial}{\partial x^j}\Big|_{\iota(x_i)}. 
\end{equation*} 
By abuse of notation, we used $\psi(\cdot,\cdot)$ and $\|\cdot\|$ as:
\[\psi(\bx_i,\bx_j):=\psi(x_i,x_j) \quad \mbox{and}\quad  \|\bv_i\|^2:=\langle\bv_i,\bv_i\rangle_M , \] 
for points $\bx_i,\bx_j\in\iota(\mathbb{H}^d)$ and a tangent vector $\bv_i$. \newline

\noindent Now we use Proposition \ref{P2.2} and \ref{P2.4} to reduce system \eqref{model} {to \eqref{model-1}-\eqref{model-ini}}: 
\begin{equation}\label{hcs}
\begin{cases}
\displaystyle \dot{\bx}_i=\bv_i,\quad t>0,\quad  i=1,\cdots,N,\\
\displaystyle \dot{\bv}_i=||\bv_i||^2{\bx}_i+\frac{\kappa}{N}\sum_{j=1}^{N}\psi(\bx_{i}, \bx_{j} )\left({\bv_j}-{\bv_i}+\frac{\langle \bx_{i}, \bv_{j} \rangle_M }{1-\langle \bx_{i}, \bx_{j} \rangle_M }(\bx_{i}+\bx_{j})\right),\\
\displaystyle(\bx_i(0),\bv_i(0))=(\bx_i^{in},\bv_i^{in})\in \mathbb{R}^{d+1}\times \mathbb{R}^{d+1},\quad i=1,\cdots,N,
\end{cases}
\end{equation}
together with the initial constraint 
\begin{equation}\label{hcsinitial}
\langle \bx_i^{in},\bx_i^{in}\rangle_M =-1,\quad \langle \bx_i^{in},\bv_i^{in}\rangle_M =0,\quad i=1,\cdots,N. 
\end{equation}
Since \eqref{hcs}--\eqref{hcsinitial} is an alternative representation of intrinsic ordinary differential equation \eqref{model} on $\mathbb{H}^d$, it is clear that $\bx_1,\cdots\bx_N$ lie on $\iota(\mathbb{H}^d)$ along the flow.  From now on, we call the model as ``{\it the hyperbolic Cucker-Smale (HCS) model}".
\\\\
{On the other hand,} HK model was first formulated in \cite{R-L-W} as a hyperbolic analogue of classical Kuramoto model \cite{Ku2}, which has been extensively studied as a prototype mathematical model of sychronization for weakly coupled oscillators. 
More precisely, it has been derived by selecting the noncompact Lorentz Group $SO(1,1)$ as the symmetry group for the system of matrix equation {in \cite{H-K-R}:}
\begin{equation}\label{Lie}
\dot{X}_iX_i^{-1}=\Omega_i+\frac{\kappa}{2N}\sum_{j=1}^{N}\left(X_jX_i^{-1}-X_iX_j^{-1} \right),
\end{equation}
where the group element $X_i$ is a position of $i$-th node and $\Omega_i$ is an element of corresponding Lie algebra. Then, we parametrize $X_i\in SO(1,1)$ and $\Omega_i$ to 
\begin{equation} \label{New-1}
X_i=e^{\alpha_i J}=\begin{pmatrix}
\cosh \alpha_i~ \sinh \alpha_i\\\sinh \alpha_i~\cosh \alpha_i
\end{pmatrix},\quad \Omega_i:=\omega_i J, \quad J:=\begin{pmatrix}
0\quad 1\\1\quad 0
\end{pmatrix}, 
\end{equation}
{for hyperbolic angles $\alpha_i$ and real parameters $\omega_i$, and we substitute ansatz \eqref{New-1} into \eqref{Lie} to get the following system:}
\begin{equation}\label{hk}
\dot{\alpha}_i=\omega_i+\frac{\kappa}{N}\sum_{j=1}^{N}\sinh(\alpha_j-\alpha_i),\quad i=1,\cdots,N,\quad t>0.
\end{equation}
This system is coined as the hyperbolic Kuramoto model in {\cite{R-L-W}.} In addition, the equation \eqref{Lie} also exhibits an asymptotic phase locking:
\[\exists~~\lim_{t \to \infty}X_iX_j^{-1}\, \quad \forall~i,j=1,\cdots,N, \]
for sufficiently large $\kappa$ and for restricted initial configuration. 
\\\\
{Now}, we discuss a reduction from the HCS model to the HK model. For this, we consider the HCS model on a geodesic.  Suppose that there exists a geodesic $\gamma$ of $\mathbb{H}^d$ containing all initial positions $x_1^{in},\cdots,x_N^{in}$ of the CS model \eqref{model}. We further assume that all initial velocity vectors $v_1^{in},\cdots,v_N^{in}$ are tangent to the geodesic $\gamma$ so that the geodesic $\gamma$ is positively invariant under the HCS flow. Then, system \eqref{model} becomes intrinsic to the submanifold $\gamma$ of $\mathbb{H}^d$, and we can parametrize the time-evolution of $x_i$ by one-dimensional parameter $\alpha_i$ (see Proposition \ref{P2.1}):
\[\bx_i(s)=\iota(x_i(s))=\bp\cosh(\alpha_i(s))+\bq\sinh(\alpha_i(s)), \]
where $\bp$ and $\bq$ satisfy
\[\langle\bp,\bp\rangle_M =-1,\quad \langle\bp,\bq\rangle_M =0,\quad \langle\bq,\bq\rangle_M =1.\]
Then, the first and second derivative of $\bx_i$ can be written as 
\[
\begin{aligned}
\dot{\bx}_i&=\left(\bp\sinh\alpha_i+\bq\cosh\alpha_i \right)\dot{\alpha}_i,\quad \|\dot{\bx}_i\|=|\dot{\alpha}_i|,\quad \\
\ddot{\bx}_i&=\left(\bp\sinh\alpha_i+\bq\cosh\alpha_i \right)\ddot{\alpha}_i+\left(\bp\cosh\alpha_i+\bq\sinh\alpha_i\right)|\dot{\alpha}_i|^2,
\end{aligned}
\]
and we have 
\[\langle \bx_i,\bx_j\rangle_M =-\cosh(\alpha_i-\alpha_j),\quad \langle\bx_i,\dot{\bx}_j\rangle_M =\sinh(\alpha_i-\alpha_j)\dot{\alpha}_j. \]
Therefore, we reduce the hyperbolic CS model \eqref{hcs} to the following ODE system for the parameters $(\alpha_i)_i$: for $i=1,\cdots,N$,
\begin{align}\label{e1}
\begin{aligned}
&\left(\bp\sinh\alpha_i+\bq\cosh\alpha_i \right)\ddot{\alpha}_i \\ &=\frac{\kappa}{N}\sum_{j=1}^{N}\psi(\bx_i,\bx_j)\left(\dot{\bx}_j-\dot{\bx}_i+\frac{\sinh(\alpha_i-\alpha_j)}{1+\cosh(\alpha_i-\alpha_j)}\dot{\alpha}_j(\bx_i+\bx_j) \right).
\end{aligned}
\end{align}
We then apply a linear operator $\left\langle \left(\bp\sinh\alpha_i+\bq\cosh\alpha_i \right),\cdot\right\rangle_M $ to the both sides of \eqref{e1} to deduce
 \begin{equation}\label{e2}
\begin{aligned}
 \ddot{\alpha}_i&=\frac{\kappa}{N}\sum_{j=1}^{N}\psi(\bx_i,\bx_j)\left(\cosh(\alpha_i-\alpha_j)\dot{\alpha_j}-\dot{\alpha}_i+\frac{\sinh(\alpha_i-\alpha_j)}{1+\cosh(\alpha_i-\alpha_j) }\dot{\alpha}_j\cdot\sinh(\alpha_j-\alpha_i) \right)\\
&=\frac{\kappa}{N}\sum_{j=1}^{N}\psi(\bx_i,\bx_j)\left(\dot{\alpha}_j-\dot{\alpha}_i \right),\quad i=1,\cdots,N.
\end{aligned}
 \end{equation}
Since $\mathbb{H}^d$ is a homogeneous space for the Lorentz group $O^+(1,d)$ as an isometry group, it is reasonable to assume $\psi(\bx_i,\bx_j)$ as a function of geodesic distance $d(x_i,x_j)=\cosh(\alpha_i-\alpha_j)$ from the symmetry of $\mathbb{H}^d$. In particular, if $\psi(\bx_i,\bx_j)=d(x_i,x_j)$, we can integrate the right-hand side of \eqref{e2} and obtain the first-order ODE system 
 \[\begin{aligned}
 \dot{\alpha}_i&=\left[\dot{\alpha}_i(0)-\frac{\kappa}{N}\sum_{j=1}^{N}\sinh(\alpha_j(0)-\alpha_i(0))\right]+\frac{\kappa}{N}\sum_{j=1}^{N}\sinh(\alpha_j-\alpha_i)\\
 &=:\omega_i+\frac{\kappa}{N}\sum_{j=1}^{N}\sinh(\alpha_j-\alpha_i),\quad i=1,\cdots,N,
 \end{aligned}   \]
 {which} exactly coincides with the hyperbolic Kuramoto model \eqref{hk}.

 \section{Emergent behaviors for the HCS model} \label{sec:4}
\setcounter{equation}{0}
In this section, we first recall velocity alignment for the abstract manifold CS model and provide an improved velocity alignment estimate for $\mathbb{H}^d$. 

\subsection{Abstract manifold CS model} \label{sec:4.1} 
In this subsection, we briefly recall the velocity alignment estimate in \cite{A-H-S, H-K-S} for the abstract CS model \eqref{model} on manifold $(M,g)$:
\begin{equation}\label{MCS}
\begin{cases}
\displaystyle \dot{x}_i= v_i,\quad t>0,\quad  i=1,\cdots,N,\\
\displaystyle\nabla_{v_i} v_i=\frac{\kappa}{N}\sum_{j=1}^{N}\psi(x_i,x_j)\left({P_{ij}} v_j- v_i \right).
\end{cases}
\end{equation}
To see the dissipative nature of \eqref{MCS} and velocity alignment in a priori setting, we recall the energy functional defined in \eqref{energy}:
\[ \mathcal{E}[t]:= \frac{1}{2} \sum_{i=1}^{N}\|v_i\|_{x_i}^2= \frac{1}{2} \sum_{i=1}^{N}g({\dot x}_i, {\dot x}_i), \quad t \geq 0. \] 
\begin{proposition}\emph{\cite{A-H-S, H-K-S}}\label{P4.1}
	Let $Z:=(z_1,\cdots,z_N)$ be a global-in-time smooth solution to \eqref{MCS}. Then, the following two assertions hold.
\begin{enumerate}
\item
The total kinetic energy $ \mathcal{E}[\cdot]$ is monotonically decreasing in time:
	\[\frac{d}{dt} \mathcal{E}[t]=-\frac{\kappa}{2N}\sum_{i,j}\psi(x_i,x_j)\|P_{ij} v_j- v_i\|_{x_i}^2\leq 0, \quad t > 0. \] 
\item
Suppose that the following a priori conditions hold:
	\begin{equation}\label{CSap}
	\inf_{0\leq t<\infty}\min_{i,j}\psi(x_i(t), x_j(t))=:\psi_m>0,\quad \sup_{0 \leq t < \infty} \left|\frac{d}{dt}\sum_{i, j}\|P_{ij} v_j- v_i\|_{x_i}^2 \right|<\infty.
	\end{equation}
	Then, the configuration $Z=Z(t)$ exhibits asymptotic velocity alignment: 
	\[\lim_{t \to \infty} \|P_{ij} v_j- v_i\|_{x_i}=0,\quad \forall~ i,j=1,\cdots,N. \]
\end{enumerate}
\end{proposition}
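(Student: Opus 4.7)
The statement has two parts which I would attack sequentially. For part (1), the plan is a direct computation of $\frac{d}{dt}\mathcal{E}[t]$ using the metric compatibility of the Levi-Civita connection and a symmetrization trick exploiting the fact that parallel transport is an isometry. For part (2), I would combine part (1) with Barbalat's lemma.

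\textbf{Part (1): Energy dissipation.} Because $\nabla$ is compatible with $g$, I get $\frac{d}{dt}\|v_i\|_{x_i}^2 = 2 g_{x_i}(\nabla_{v_i} v_i, v_i)$, and after substituting the ODE in $\eqref{MCS}_2$ this yields
\begin{equation*}
\frac{d}{dt}\mathcal{E}[t]=\frac{\kappa}{N}\sum_{i,j}\psi(x_i,x_j)\,g_{x_i}\!\left(P_{ij}v_j-v_i,\,v_i\right).
\end{equation*}
The key remaining ingredients are: (i) $\psi$ is symmetric, so I may symmetrize the double sum by swapping $i\leftrightarrow j$; and (ii) $P_{ij}:T_{x_j}M\to T_{x_i}M$ is a linear isometry with inverse $P_{ji}$, so $g_{x_i}(P_{ij}v_j,v_i)=g_{x_j}(v_j,P_{ji}v_i)$ and $\|P_{ij}v_j\|_{x_i}=\|v_j\|_{x_j}$. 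Using these, each symmetrized summand collapses to the polarization identity
\begin{equation*}
g_{x_i}(P_{ij}v_j-v_i,v_i)+g_{x_j}(P_{ji}v_i-v_j,v_j)=-\|P_{ij}v_j-v_i\|_{x_i}^2,
\end{equation*}
which immediately gives the claimed dissipation formula and hence the monotone decay of $\mathcal{E}[t]$.

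\textbf{Part (2): Asymptotic velocity alignment.} Set $D(t):=\sum_{i,j}\|P_{ij}v_j-v_i\|_{x_i}^2$. From part (1) and the lower bound $\psi\ge \psi_m>0$,
\begin{equation*}
-\frac{d}{dt}\mathcal{E}[t]\;\ge\;\frac{\kappa\psi_m}{2N}\,D(t).
\end{equation*}
Since $\mathcal{E}[\cdot]$ is nonnegative and nonincreasing, it is bounded; integrating the above inequality over $[0,\infty)$ gives $\int_0^\infty D(t)\,dt<\infty$. The second a priori assumption in \eqref{CSap} is exactly the hypothesis $|D'(t)|\le C$ needed for Barbalat's lemma, which then forces $D(t)\to 0$ as $t\to\infty$, and since $D$ is a sum of nonnegative squares this yields $\|P_{ij}v_j-v_i\|_{x_i}\to 0$ for every pair $(i,j)$.

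\textbf{Expected obstacle.} Part (1) is essentially a bookkeeping exercise once one remembers that $P_{ij}$ is an isometric inverse of $P_{ji}$; this is the only geometric input and is cleanly delivered by the definition of the Levi-Civita parallel transport. Part (2) is entirely analytic and the only nontrivial requirement, namely a uniform bound on $D'(t)$, is supplied by hypothesis. Thus I anticipate no genuine obstacle beyond carefully writing out the symmetrization in part (1); the hard work of bounding $D'(t)$ on the manifold side is deferred to the a priori assumption \eqref{CSap}, and verifying that assumption for the concrete HCS system on $\mathbb{H}^d$ is left to the subsequent sections.
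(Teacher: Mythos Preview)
Your proposal is correct and follows essentially the same route as the paper: for part~(1) the paper also differentiates $\mathcal{E}$ via metric compatibility, symmetrizes the double sum under $i\leftrightarrow j$, and invokes the isometry properties $P_{ij}P_{ji}=I$ and $g_{x_i}(P_{ij}u,P_{ij}w)=g_{x_j}(u,w)$ to collapse the summands; for part~(2) it likewise integrates the dissipation inequality to obtain $\int_0^\infty D(t)\,dt<\infty$ and then applies Barbalat's lemma using the a~priori bound on $D'(t)$.
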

\begin{proof} 
The detailed proofs can be found in \cite{A-H-S, H-K-S}, but for reader's convenience, we briefly sketch the proof below.\\ \newline
Then dissipativity and a priori velocity alignment for \eqref{MCS} can be summarized in the following proposition.
\noindent (i)~We use $\eqref{MCS}_2$ and index exchange $i \longleftrightarrow k$ to derive the dissipation estimate for ${\mathcal E}$:
\begin{align*}
\begin{aligned}
\frac{d {\mathcal E}}{dt}	&= \sum\limits_{i=1}^N g_{x_i}\Big(v_i, \frac{\kappa}{N} \sum\limits_{k=1}^N \psi(x_i,x_k) (P_{ik} v_k- v_i) \Big)_{x_i}\\
					&= \frac{\kappa}{2N} \sum\limits_{i,k=1}^N \psi(x_i,x_k) \left(g_{x_i}\Big( v_i,  P_{ik} v_k-v_i\Big)+g_{x_k}\Big( v_k,  P_{ki} v_i- v_k  \Big)\right)\\
					&= \frac{\kappa}{2N} \sum\limits_{i,k=1}^N \psi(x_i,x_k) \left(g_{x_i}\Big( v_i,  P_{ik} v_k-v_i\Big)+g_{x_i}\Big( P_{ik} v_k, v_i-P_{ik} v_k \Big)\right)\\
					&= -\frac{\kappa}{2N} \sum\limits_{i,k=1}^N \psi(x_i,x_k) \lVert P_{ik}v_k-v_i\rVert_{x_i}^2 \leq0,
\end{aligned}
\end{align*}
where we also used the following properties of parallel transport:
\[ P_{ik} P_{ki}= I \quad \mbox{and} \quad  g_{x_i}\left(P_{ik} v_k, P_{ik} w_k\right) =  g_{x_k}\left(v_k, w_k\right). \]

\vspace{0.2cm}

\noindent (ii)~It follows  from the first assertion that
	\[\int_{0}^{T}\left(\sum_{i,j}\psi(x_i, x_j)\|P_{ij} v_j- v_i\|_{x_i}^2\right) dt=\frac{2N}{\kappa}\left(\mathcal{E}(0)-\mathcal{E}(T) \right)\leq \frac{2N\mathcal{E}(0)}{\kappa}<\infty,\quad \forall~ T>0. \]
	Then, \textit{a priori} assumption $\eqref{CSap}_1$ implies 
	\[\int_{0}^{\infty}\sum_{i, j}\|P_{ij} v_j- v_i\|^2_{x_i}dt\leq \frac{2N\mathcal{E}(0)}{\kappa\psi_m}<\infty,\quad \forall~i,k\in\left\{1,\cdots,N \right\}. \]
	Since the primitive of $\sum_{ij}\|P_{ij}v_j- v_i\|_{x_i}^2$ satisfies all the conditions in  Barbalat's Lemma {(see \cite{B})}, we can conclude the desired estimate.
\end{proof}
\vspace{0.1cm}
\begin{remark} \label{R4.1}
In the sequel, we provide several comments on the results of Proposition \ref{P4.1}. \newline

\noindent 1. The first a priori condition in $\eqref{CSap}_1$ is related to modeling issue. By choosing $\psi$ to have a positive lower bound, we can get rid of the first a priori assumption, in contrast the second assumption is a genuine a priori condition to be checked for specific manifold. \\

\noindent 2.~Note that the estimates in Proposition \ref{P4.1} hold regardless of smooth structures and metrizations of given Riemannian manifold $(M,g)$. \\

\noindent 3.~Although $\mathcal{E}$ is smooth, nonnegative and monotonically decreasing, it does not imply the convergence of the derivative immediately. For instance, consider a following real-analytic function $h$:
\[h(x)=-\sum_{k=1}^{\infty}e^{-(k^2x-k^3)^2},\quad x\in\mathbb{R}. \]
Then, the function $h$ above does not converges to zero, while the primitive of $h$ is analytic, bounded and monotonically decreasing. Indeed, to obtain the zero convergence of $\frac{d\mathcal{E}}{dt}$, we need a uniform continuity of $\frac{d\mathcal{E}}{dt}$.\\

\noindent 4.~In \cite{H-K-S}, the authors verified the second \textit{a priori} condition $\eqref{CSap}_2$ for $M=\mathbb{S}^2$ and $M=\mathbb{H}^2$, and provided a sufficient framework to guarantee  $\eqref{CSap}_1$ for two-particle system on $\mathbb{S}^2$. Then, the authors in \cite{A-H-S} provided an explicit proof of  $\eqref{CSap}_2$ for $M=\mathbb{S}^d$ for every dimension $d$. In that sense, we here find an analogous proof of  $\eqref{CSap}_2$ for $M=\mathbb{H}^d$ any dimension $d$.
\end{remark}
\vspace{0.1cm}

\subsection{Velocity alignment of the HCS model} \label{sec:4.2}
In this subsection, we show that the second a priori condition $\eqref{CSap}_2$ can be verified for the HCS model by energy dissipation estimate and explicit form of parallel transport. 
First, we recall

\begin{equation}\label{HCS-1}
\begin{cases}
\displaystyle \dot{\bx}_i=\bv_i,\quad t>0,\quad  i=1,\cdots,N,\\
\displaystyle \dot{\bv}_i=||\bv_i||^2{\bx}_i+\frac{\kappa}{N}\sum_{j=1}^{N}\psi(\bx_{i}, \bx_{j} )\left({\bv_j}-{\bv_i}+\frac{\langle \bx_{i}, \bv_{j} \rangle_M }{1-\langle \bx_{i}, \bx_{j} \rangle_M }(\bx_{i}+\bx_{j})\right),\\
\displaystyle(\bx_i(0),\bv_i(0))=(\bx_i^{in},\bv_i^{in})\in \mathbb{R}^{d+1}\times \mathbb{R}^{d+1}, \\
\end{cases}
\end{equation}
subject to the initial constraints:
\begin{equation}\label{HCS-ini}
\langle \bx_i^{in},\bx_i^{in}\rangle_M =-1,\quad \langle \bx_i^{in},\bv_i^{in}\rangle_M =0,\quad i=1,\cdots,N. 
\end{equation}
Before we present velocity alignment estimate for \eqref{HCS-1} and \eqref{HCS-ini}, we begin with the following preparatory lemma. 
\vspace{0.2cm}

\begin{lemma}\label{L4.1}
	Let $\bx,\by$ and $\bv=(v^0,\cdots,v^d)$ be three vectors in $\mathbb{R}^{d+1}$ satisfying 
	\[\langle \bx,\bx\rangle_M =\langle \by,\by\rangle_M =-1\quad\mbox{and}\quad \langle \bx,\bv\rangle_M =0. \] 
	Then, we have the following inequality:
	\[\left|\frac{\langle \by,\bv\rangle_M }{1-\langle \bx,\by\rangle_M }\right|\leq \|\bv\|. \]
\end{lemma}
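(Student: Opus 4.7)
The plan is to orthogonally decompose $\by$ relative to $\bx$ using the Minkowski form, reducing the inequality to an ordinary Cauchy–Schwarz bound on the (positive definite) tangent space $T_\bx \mathbb{H}^d$.

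First I would set $c := -\langle \bx, \by\rangle_M$ and observe that Proposition \ref{P2.3} gives $c = \cosh d(x,y) \geq 1$. I would then write $\by = c\bx + \bw$, where
\[
\bw := \by - c\bx, \qquad \langle \bx, \bw\rangle_M = -c + c = 0,
\]
so $\bw$ lies in the Minkowski-orthogonal complement $\bx^{\perp_M}$. By Proposition \ref{P2.2}(2), this complement equals the tangent space $d\iota_x(T_x\mathbb{H}^d)$, on which Proposition \ref{P2.2}(3) says the Minkowski form agrees with the Riemannian inner product $g_x$ and is therefore positive definite. Using $\langle \by,\by\rangle_M = -1$ and $\langle \bx,\bx\rangle_M = -1$, expanding gives
\[
-1 = c^2\langle \bx,\bx\rangle_M + 2c\langle \bx,\bw\rangle_M + \langle \bw,\bw\rangle_M = -c^2 + \|\bw\|^2,
\]
so $\|\bw\|^2 = c^2 - 1$.

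Next, since $\langle \bx, \bv\rangle_M = 0$ by hypothesis, both $\bv$ and $\bw$ live in $\bx^{\perp_M}$, and the decomposition of $\by$ yields
\[
\langle \by, \bv\rangle_M = c\langle \bx, \bv\rangle_M + \langle \bw, \bv\rangle_M = \langle \bw, \bv\rangle_M.
\]
Applying the ordinary Cauchy–Schwarz inequality inside the positive definite inner product space $(\bx^{\perp_M}, \langle\cdot,\cdot\rangle_M)$, I get $|\langle \bw, \bv\rangle_M| \leq \|\bw\|\,\|\bv\| = \sqrt{c^2 - 1}\,\|\bv\|$. Therefore
\[
\left|\frac{\langle \by, \bv\rangle_M}{1 - \langle \bx,\by\rangle_M}\right|
= \frac{|\langle \bw,\bv\rangle_M|}{1 + c}
\leq \frac{\sqrt{c^2 - 1}}{1 + c}\,\|\bv\|
= \sqrt{\frac{c - 1}{c + 1}}\,\|\bv\| \leq \|\bv\|,
\]
where the last step uses $c \geq 1$.

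The only mildly delicate point is invoking Cauchy–Schwarz with the Minkowski form, which a priori is indefinite; the key is that the restriction to $\bx^{\perp_M}$ is positive definite, which is exactly the content of Proposition \ref{P2.2}. Once that is pinned down, the argument is a short calculation. No significant obstacle is anticipated.
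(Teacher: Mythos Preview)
Your proof is correct but follows a genuinely different route from the paper's. The paper argues geometrically: it realizes $\bv$ as the initial velocity of a curve $x(t)$ in $\mathbb{H}^d$, rewrites $\langle \by,\bv\rangle_M$ as $-\sinh\big(d(y,x(0))\big)\cdot\frac{d}{dt}\big|_{t=0}d(y,x(t))$, uses $|\sinh s|\le 1+\cosh s$ together with $1-\langle\bx,\by\rangle_M = 1+\cosh d(x,y)$, and finally bounds the derivative of the distance by the speed $\|\bv\|$ via the triangle inequality. Your argument instead stays purely algebraic: the orthogonal splitting $\by=c\bx+\bw$ places both $\bw$ and $\bv$ in $\bx^{\perp_M}$, where the Minkowski form is positive definite, so ordinary Cauchy--Schwarz applies. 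Both approaches in fact produce the same intermediate bound $\sqrt{(c-1)/(c+1)}\,\|\bv\|=\tanh\!\big(d(x,y)/2\big)\|\bv\|$; yours reaches it with less machinery and no curve, while the paper's version makes the geometric content (Lipschitz bound on $d(y,\cdot)$) more visible. One small remark: like the paper, you are implicitly using that $\bx$ and $\by$ lie on the same sheet of the hyperboloid to get $c\ge 1$ from Proposition~\ref{P2.3}; this is harmless in context but worth stating, since the lemma's hypotheses alone do not force it.
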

\begin{proof}
Consider a curve $x:\mathbb{R}\to \mathbb{H}^d$ and a point $y\in \mathbb{H}^d$ satisfying
	\begin{equation*}
	\iota(x(0))=\bx,\quad \iota(y)=\by,\quad d\iota_{x(0)}(\dot{x}(0))=\sum_{i=0}^{d}v^i\frac{\partial}{\partial x^i}\Big|_{\bx},
	\end{equation*}
	and denote $\bx(t):=\iota(x(t))$. Then, we use Proposition \ref{P2.4} and $\bv=\dot{\bx}(0)$ to obtain  
	\[\begin{aligned}
	\langle \by,\bv\rangle_M =\langle \by,\dot{\bx}(0)\rangle_M  &=\frac{d}{dt}\Big|_{t=0}\langle \by,\bx(t)\rangle_M  =-\frac{d}{dt}\Big|_{t=0}\cosh \Big(d(y,x(t))\Big)\\
	&=-\sinh\Big(d(y,x(0))\Big)\frac{d}{dt}\Big|_{t=0}d(y,x(t)).
	\end{aligned} \]
	Since $|\sinh(s)| \leq 1 + \cosh(s)$ and by using proposition 2.3, it suffices to show
	\begin{equation}\label{claim}
	\left|\frac{d}{dt}\Big|_{t=0}d(y,x(t))\right|\leq \|\bv\|.
	\end{equation}
	On the other hand, since geodesic distance $d(\cdot,\cdot)$ satisfies the triangle inequality, we have 
    \begin{equation}\label{speed}
    \left|d(y,x(t_1))-d(y,x(t_2))\right|\leq d(x(t_1),x(t_2))\leq \int_{t_1}^{t_2}\sqrt{g(\dot{x}(s),\dot{x}(s))}\mathrm {d}s=\int_{t_1}^{t_2}\|\dot{\bx}(s)\|\mathrm {d}s,
    \end{equation}
	for every $-\infty<t_1<t_2<\infty$. Hence, we divide \eqref{speed} by $t_2-t_1$ and take $t_1,t_2\to 0$ to obtain the claim \eqref{claim} and conclude the proof.
\end{proof}

\vspace{0.2cm}

Now, we are ready to show our second main result on the emergence of velocity alignment to the Cauchy problem to the HCS model \eqref{HCS-1} -- \eqref{HCS-ini}.  Our statement concerns a long-time behavior of solution to \eqref{model} in $\mathbb{H}^d$, but their analyses are indeed done for \eqref{hcs}--\eqref{hcsinitial}.
\vspace{0.2cm}
\begin{theorem} \label{T4.1}
Suppose that the coupling strength and communication weight satisfy 
\vspace{0.1cm}
\[  \kappa > 0, \quad  \inf_{(x, y) \in ({\mathbb H}^d)^2} \psi(x, y) \geq \psi_m, \]
for some positive constant $\psi_m$, and let $Z=(z_1,\cdots,z_N)$ be a solution to \eqref{model} on $\mathbb{H}^d$ with the initial data $\{(x_i^{in},v_i^{in})\}_{i=1}^N\subset T\mathbb{H}^d$. Then, the solution $Z=Z(t)$ exhibits asymptotic velocity alignment: 
	\[\lim_{t \to \infty} \|P_{ij} v_j- v_i\|_{x_i}=0,\quad \forall~ i,j=1,\cdots,N. \]
\end{theorem}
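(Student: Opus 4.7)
The plan is to invoke Proposition \ref{P4.1}(2) with $M = \mathbb{H}^d$; since the first \emph{a priori} hypothesis $\psi \ge \psi_m > 0$ is given directly by assumption, the entire task reduces to verifying the second, namely
\[
\sup_{0 \le t<\infty} \left|\frac{d}{dt}\sum_{i,j=1}^N \|P_{ij}v_j - v_i\|_{x_i}^2\right| < \infty. \qquad (\ast)
\]

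First I would collect the basic uniform bounds. The monotonicity $\mathcal{E}[t] \le \mathcal{E}[0]$ from Proposition \ref{P4.1}(1) yields $\|v_i(t)\|_{x_i(t)} \le V_0 := \sqrt{2\mathcal{E}[0]}$ for all $i,t$. Applying Lemma \ref{L4.1} with $(\bx, \by, \bv) = (\bx_j, \bx_i, \bv_j)$ then gives the key scalar estimate
\[
|a_{ij}(t)| := \left|\frac{\langle \bx_i, \bv_j\rangle_M}{1 - \langle \bx_i, \bx_j\rangle_M}\right| \le V_0.
\]
Since the CS force $F_i := \dot{\bv}_i - \|\bv_i\|^2 \bx_i$ is tangent at $x_i$ and equals a sum of $N$ terms $\tfrac{\kappa}{N}\psi(\cdot)(P_{ik}v_k - v_i)$, each of Riemannian norm $\le 2V_0$, we have $\|F_i\|_{x_i} \le 2\kappa \psi_M V_0$ uniformly, and Lemma \ref{L4.1} with $\bv = F_j$ yields the analogous bound $|\langle \bx_i, F_j\rangle_M|/(1 + \cosh d(x_i,x_j)) \le \|F_j\|_{x_j}$.

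Second, setting $T_{ij} := \|P_{ij}v_j - v_i\|_{x_i}^2$ and using the explicit parallel-transport formula \eqref{parallel},
\[
T_{ij} = \|v_i\|_{x_i}^2 + \|v_j\|_{x_j}^2 - 2\bigl(\langle \bv_j, \bv_i\rangle_M + a_{ij}\langle \bx_j, \bv_i\rangle_M\bigr).
\]
Differentiating the first two terms and using $\langle \bx_k, \bv_k\rangle_M = 0$ reduces them to $2g_{x_i}(v_i, F_i) + 2g_{x_j}(v_j, F_j)$, which is uniformly bounded. Differentiating the cross-term by substituting $\dot{\bv}_k = \|\bv_k\|^2 \bx_k + F_k$ together with the quotient rule for $\dot a_{ij}$ produces Minkowski pairings such as $\langle \bx_j, \bv_i\rangle_M$, $\langle \bx_i, F_j\rangle_M$, $\langle \bx_j, F_i\rangle_M$, and $\langle \bv_i, \bv_j\rangle_M$, each of which can grow with $d(x_i, x_j)$. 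A careful algebraic manipulation shows, however, that every such potentially large pairing appears divided by the denominator $1 - \langle \bx_i, \bx_j\rangle_M = 1 + \cosh d(x_i, x_j)$, so Lemma \ref{L4.1} (applied to both $\bv_j$ and $F_j$) collapses each ratio to a uniform constant, while the remaining pieces are bounded directly in terms of $V_0$, $\|F_i\|_{x_i}$, $\|F_j\|_{x_j}$, and $a_{ij}$.

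With $(\ast)$ in hand, Proposition \ref{P4.1}(2) delivers the asymptotic velocity alignment. The main obstacle I anticipate is precisely this last bookkeeping step: verifying that after the full expansion every unbounded Minkowski pairing is coupled with a $1 + \cosh d(x_i,x_j)$ denominator so that Lemma \ref{L4.1} becomes applicable. This reflects the underlying geometric fact that while $\langle \bv_j, \bv_i\rangle_M$ and $a_{ij}\langle \bx_j, \bv_i\rangle_M$ are individually unbounded as $d(x_i,x_j) \to \infty$, their sum equals $\langle P_{ij}\bv_j, \bv_i\rangle_M = g_{x_i}(P_{ij}v_j, v_i)$, which is bounded by $V_0^2$, and the same cancellation mechanism must be traced through to the derivative level.
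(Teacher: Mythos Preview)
Your proposal is correct and follows essentially the same route as the paper: reduce to the a priori bound $(\ast)$ via Proposition~\ref{P4.1}, expand $g(P_{ij}v_j,v_i)$ using the explicit parallel-transport formula, differentiate with the substitution $\dot{\bv}_k=\|\bv_k\|^2\bx_k+\nabla_{\bv_k}\bv_k$, and control each term by Lemma~\ref{L4.1}. The paper carries out your anticipated ``bookkeeping step'' explicitly by regrouping so that, for instance, the $\dot{\bv}_i$-contribution collapses exactly to $g(\nabla_{v_i}v_i,P_{ij}v_j)+\|\bv_i\|^2\,a_{ji}$, confirming that the cancellation you describe indeed propagates to the derivative level.
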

\begin{proof}
From Proposition \ref{P4.1}, it suffices to check that the a priori condition $\eqref{CSap}_2$ holds for system \eqref{HCS-1} -- \eqref{HCS-ini}: 
	\begin{equation}\label{uniform}
	\sup_{0 \leq t < \infty}\left|\frac{d}{dt}\|P_{ij} v_j- v_i\|_{x_i}^2 \right|<\infty,\quad \forall~i,j=1,\cdots,N.
	\end{equation}
	Then, we decompose $\frac{d}{dt}\|P_{ij} v_j- v_i\|_{x_i}^2$ into two terms and show their boundedness one by one. Namely, 
	\[\begin{aligned}
	\frac{d}{dt}\|P_{ij} v_j- v_i\|_{x_i}^2&=\frac{d}{dt}g(P_{ij} v_j-v_i,P_{ij}v_j-v_i)\\
	&=\frac{d}{dt}\left(\|v_i\|_{x_i}^2+\|v_j\|_{x_j}^2 \right)-2\frac{d}{dt}g(P_{ij}v_j,v_i)\\
	&=:\mathcal{I}_{11}+\mathcal{I}_{12}.
	\end{aligned} \]
	$\bullet$ Step A (Boundedness of $\mathcal{I}_{11}$): By the compatibility between metric tensor $g$ and Levi-Civita connection $\nabla$, we have 
	\[\begin{aligned}
	\mathcal{I}_{11}=2g(v_i,\nabla_{v_i} v_i)+2g(v_j,\nabla_{v_j} v_j),
	\end{aligned} \]
	where $v_i$ and $\nabla_{v_i} v_i$ are controlled by the initial kinetic energy. More precisely, since kinetic energy $\mathcal{E}$ is monotonically decreasing, we can apply Proposition \ref{P2.1} to obtain
	\begin{equation}\label{basicbounds}
	\begin{aligned}
	\|v_i(t)\|_{x_i(t)}&\leq \sqrt{2\mathcal{E}[t]}\leq \sqrt{2\mathcal{E}[0]},\quad \forall~i=1,\cdots,N,\\
	\|\nabla_{v_i(t)} v_i(t)\|_{x_i(t)}&\leq \frac{\kappa}{N}\sum_{k=1}^{N}\psi(x_i(t),x_k(t))\|P_{ik}v_k(t)-v_i(t)\|_{x_i}\leq 2\kappa\psi_M\sqrt{2\mathcal{E}[0]}.
	\end{aligned}
	\end{equation}
	Therefore, $\mathcal{I}_{11}$ has a {finite uniform upper bound} in $t$.\\
	\vspace{0.2cm}
	
	\noindent $\bullet$ Step B (Boundedness of $\mathcal{I}_{12}$): For solution $Z=(\bz_1,\cdots,\bz_N)$, we consider a $(d+1)$-dimensional representations of $\bx_1,\cdots,\bx_N$ by using \eqref{tangent}. Then, we combine Proposition \ref{P2.3} and \ref{P2.4} to obtain an explicit formula for $g(P_{ij}\bv_j,\bv_i)$:
	\begin{equation}\label{inner}
	g(P_{ij}v_j,v_i) =\left\langle \bv_j+\frac{\langle \bv_j,\bx_i\rangle_M }{1-\langle \bx_i,\bx_j\rangle_M }(\bx_i+\bx_j),\bv_i\right\rangle_M  =\langle\bv_i,\bv_j\rangle_M +\frac{\langle \bv_i,\bx_j\rangle_M \langle \bv_j,\bx_i\rangle_M }{1-\langle \bx_i,\bx_j\rangle_M }.
	\end{equation},where we used $\langle \bx_i,\bv_i \rangle_M $$=0$.
	Now, we decompose $t$-derivative of \eqref{inner} into 
	\begin{equation}
	\begin{aligned}
	\frac{d}{dt}g(P_{ij}v_j,v_i)&=\left(\langle\dot{\bv}_i,\bv_j\rangle_M +\frac{\langle \dot{\bv}_i,\bx_j\rangle_M \langle \bv_j,\bx_i\rangle_M }{1-\langle \bx_i,\bx_j\rangle_M }\right)+\left(\langle\bv_i,\dot{\bv}_j\rangle_M +\frac{\langle \bv_i,\bx_j\rangle_M \langle \dot{\bv}_j,\bx_i\rangle_M }{1-\langle \bx_i,\bx_j\rangle_M }\right)\\
	&\hspace{0.5cm}+\frac{\langle \bv_i,\bx_j\rangle_M +\langle {\bv}_j,\bx_i\rangle_M }{1-\langle \bx_i,\bx_j\rangle_M }\left(\langle \bv_i,\bv_j\rangle_M +\frac{\langle \bv_i,\bx_j\rangle_M \langle \bv_j,\bx_i\rangle_M }{1-\langle \bx_i,\bx_j\rangle_M } \right)\\
	&=:\mathcal{I}_{121}+\mathcal{I}_{122}+\mathcal{I}_{123}.
	\end{aligned}
	\end{equation}
	Below, we further provide detailed estimates on $\mathcal{I}_{12i}$ for each $i=1,2,3$ separately.\\
	
	\noindent $\circ$ Case A (Estimates on $\mathcal{I}_{121}$ and $\mathcal{I}_{122}$):~First, we employ Proposition \ref{P2.4} to replace $\dot{\bv}_i$ by $\nabla_{\bv_i}\bv_i+\|\bv_i\|^2\bx_i$. Then, we apply Lemma \ref{L4.1} to simplify $\mathcal{I}_{121}$ as
	\[\begin{aligned}
	\mathcal{I}_{121}&=\langle\dot{\bv}_i,\bv_j\rangle_M +\frac{\langle \dot{\bv}_i,\bx_j\rangle_M \langle \bv_j,\bx_i\rangle_M }{1-\langle \bx_i,\bx_j\rangle_M }\\
	&=\left(\langle\nabla_{\bv_i}\bv_i,\bv_j\rangle_M +\frac{\langle \nabla_{\bv_i}\bv_i,\bx_j\rangle_M \langle \bv_j,\bx_i\rangle_M }{1-\langle \bx_i,\bx_j\rangle_M }\right)+\left(\langle \|\bv_i\|^2\bx_i,\bv_j\rangle_M +\frac{\langle \|\bv_i\|^2\bx_i,\bx_j\rangle_M \langle \bv_j,\bx_i\rangle_M }{1-\langle \bx_i,\bx_j\rangle_M }\right)\\
	&=\left\langle \nabla_{\bv_i}\bv_i,\bv_j+\frac{\langle \bv_j,\bx_i\rangle_M }{1-\langle\bx_i,\bx_j\rangle_M }(\bx_i+\bx_j)\right\rangle_M +\|\bv_i\|^2\frac{\langle \bv_j,\bx_i\rangle_M }{1-\langle\bx_i,\bx_j\rangle_M }\\
	&=g(\nabla_{v_i} v_i,P_{ij}v_j) +\|\bv_i\|^2\frac{\langle \bv_j,\bx_i\rangle_M }{1-\langle\bx_i,\bx_j\rangle_M }.
	\end{aligned}\]
	Therefore, $\mathcal{I}_{121}$ can be controlled by a function of initial kinetic energy:
	\[|\mathcal{I}_{121}|\leq \|\nabla_{v_i}v_i\|_{x_i}\|v_j \|_{x_j}+\|v_i\|_{x_i}^2\|v_j\|_{x_j}\leq 4\kappa\psi_M\mathcal{E}[0]+(2\mathcal{E}[0])^{\frac{3}{2}}<\infty, \] 
    and we get the same upper bound for $|\mathcal{I}_{122}|$ by using similar argument as above.\\
    
    \noindent $\circ$ Case B (Estimate on $\mathcal{I}_{123}$):~ In this case, we substitute the relation \eqref{inner} to $\mathcal{I}_{23}$ and apply Lemma \ref{L4.1} to deduce 
    \[|\mathcal{I}_{123}| \leq \left(\|v_i\|_{x_i}+\|v_j\|_{x_j}\right)\left|g(P_{ij} v_j,v_i) \right|\leq 2(2\mathcal{E}[0])^{\frac{3}{2}}<\infty. \]
Finally, we combine the uniform boundedness of $\mathcal{I}_{11}$ and $\mathcal{I}_{12}$ to obtain \eqref{uniform} and conclude the desired velocity alignment.
\end{proof}

\subsection{A dichotomy in asymptotic patterns} \label{sec:4.3}
In this subsection, we characterize an asymptotic pattern of spatial configuration $(x_1,\cdots,x_N)$, when the asymptotic flocking \eqref{flocking} emerges on $\mathbb{H}^2$. We first introduce a preparatory lemma to estimate the turning angle of tangent vector by parallel transports along a geodesic triangle on $\mathbb{H}^2$. This estimate was first introduced in \cite{A-H-S} for the Cucker-Smale model on $\mathbb{S}^2$ by using the Gauss-Bonnet theorem for geodesic triangle. 

\begin{lemma}\label{L4.2} \emph{\cite{A-H-S}}
	Let $x_1,x_2$ and $x_3$ be three points in $\mathbb{H}^2$ and $v \in T_{x_1}\mathbb{H}^2$ be a tangent vector at point $x_1$, and let $\Delta = \Delta(x_1, x_2, x_3)$ be a geodesic triangle whose edges consist of (length-minimizing) geodesics between each pair of points $\left\{x_1,x_2,x_3 \right\}$.
	Then, we have 
	\begin{equation}\label{Holonomy}	\|P_{12}P_{23}P_{31}v-v\|_{x_1}=\|v\|_{x_1}\sqrt{2-2\cos(\emph{Area}(\Delta))}.
	\end{equation} 
\end{lemma}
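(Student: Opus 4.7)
The plan is to recognize that the composite map $P := P_{12}P_{23}P_{31} \colon T_{x_1}\mathbb{H}^2 \to T_{x_1}\mathbb{H}^2$ is an orientation-preserving linear isometry of an oriented $2$-plane, hence a rotation, and then to identify its rotation angle via the Gauss--Bonnet theorem applied to the geodesic triangle $\Delta$. Each $P_{ij}$ preserves both the metric (parallel transport with respect to the Levi-Civita connection is an isometry) and the orientation along its path (since $\mathbb{H}^2$ is orientable and the transport is continuous). Consequently $P = R_\theta$ for some $\theta \in [0, 2\pi)$, and for any $v\in T_{x_1}\mathbb{H}^2$ a direct expansion in an orthonormal frame gives
\[
\|P v - v\|_{x_1}^2 = \|v\|_{x_1}^2\,(2 - 2\cos\theta).
\]
Everything thus reduces to computing $\theta$.

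To pin down $\theta$, I would invoke the Gauss--Bonnet theorem for the geodesic triangle $\Delta$. Because the three edges are geodesics the geodesic curvature vanishes along $\partial \Delta$, and because $\mathbb{H}^2$ has constant Gaussian curvature $K=-1$,
\[
\sum_{i=1}^{3}(\pi - \alpha_i) \,+\, \int_{\Delta} K\,dA \,=\, 2\pi, \qquad \text{i.e.} \qquad \alpha_1+\alpha_2+\alpha_3 \,=\, \pi - \mathrm{Area}(\Delta),
\]
where $\alpha_i$ are the interior angles of $\Delta$. Combining this with the standard curvature-holonomy identity, namely that parallel transport once around $\partial \Delta$ rotates tangent vectors by $-\int_\Delta K\,dA = \mathrm{Area}(\Delta)$ (up to sign, depending on the chosen orientation of the loop), one obtains $\theta \equiv \pm \mathrm{Area}(\Delta) \pmod{2\pi}$. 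Alternatively, one may derive this intrinsically: parallel transport along a geodesic segment preserves the angle that $v$ makes with the tangent to the geodesic, so the only contribution to the total turning of $v$ comes from the vertex transitions, where the tangent direction jumps by the exterior angle $\pi - \alpha_i$; keeping track of these three jumps and the fact that the tangent to a simple closed curve makes a full $2\pi$ revolution recovers $\theta \equiv \alpha_1+\alpha_2+\alpha_3 - \pi \equiv -\mathrm{Area}(\Delta) \pmod{2\pi}$.

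Plugging this into the rotation formula, and using that $\cos$ is even (so the $\pm$ ambiguity is harmless), yields
\[
\|P_{12}P_{23}P_{31} v - v\|_{x_1} \,=\, \|v\|_{x_1}\sqrt{2 - 2\cos(\mathrm{Area}(\Delta))}.
\]

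The main obstacle is bookkeeping in the holonomy step: one must fix the orientation of $\Delta$ consistently, match it with the direction encoded in the composite $P_{12}P_{23}P_{31}$ (which corresponds to traversing $x_1 \to x_3 \to x_2 \to x_1$), and compare the exterior-angle sign conventions with the sign in Gauss--Bonnet. Once those conventions are aligned — and the analogous spherical computation in \cite{A-H-S} supplies a ready template, with the sole modification $K = -1$ in place of $K = +1$ — the identification $|\theta| = \mathrm{Area}(\Delta)$ is immediate and the lemma follows.
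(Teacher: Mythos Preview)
Your proposal is correct and follows essentially the same route as the paper's proof: both recognize the holonomy $P_{12}P_{23}P_{31}$ as a rotation of the $2$-plane $T_{x_1}\mathbb{H}^2$ and identify its rotation angle via the Gauss--Bonnet theorem for the geodesic triangle (the paper phrases this as the angle between $v$ and $Pv$ being congruent to minus the sum of the exterior angles modulo $2\pi$). The paper's argument is in fact terser than yours, deferring the details to the analogous spherical computation in \cite{A-H-S} and noting only the modification $K=-1$ in place of $K=+1$.
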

\begin{proof}
This is a direct consequence of Gauss-Bonnet theorem for compact surface with boundary, and the proof of the analogous result for spherical triangle are provided in \cite{A-H-S}. The only difference here is that the sectional curvature $K$ is fixed as $-1$ and therefore the Gauss-Bonnet theorem says
\[-\mbox{Area}(T)=2\pi-(\alpha_1+\alpha_2+\alpha_3), \] 
where $\alpha_i$ denotes the signed exterior angle of $T$ at vertex $\bx_i$. Since the angle between $v$ and $P_{12}P_{23}P_{31}v$ is congruent to $-(\alpha_1+\alpha_2+\alpha_3)$ modulo $2\pi$ for nonzero $\bv$, we conclude the desired result \eqref{Holonomy}.
\end{proof}
Now, we recall an inequality showing the relation between relative velocities and areas of geodesic triangles introduced in \cite{A-H-S}.
\begin{lemma}\label{L4.3}\emph{\cite{A-H-S}}
	Let $(M,g)$ be an arbitrary complete Riemannian manifold and 
	\[(x_1,v_1),\cdots,(x_N,v_N)\in TM. \]  For each pair $(i,j)\in\left\{1,\cdots,N \right\}^2$, we fix a geodesic $\gamma_{ij}$ on $M$ that joins from $x_j$ to $x_i$, and {let} $P_{ij}$ be the corresponding parallel transport operator along $\gamma_{ij}$. Then, we have 
	\[\sum_{i, j,k} \|P_{ki}P_{ij}P_{jk}v_k-v_k\|_{x_k}^2\leq 9N\sum_{i, j}\|P_{ij}v_j-v_i\|_{x_i}^2. \]
\end{lemma}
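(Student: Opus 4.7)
The plan is to exploit the fact that parallel transport is a linear isometry and decompose the holonomy tangent vector $P_{ki}P_{ij}P_{jk}v_k - v_k$ along the three edges of the triangle as a telescoping sum. Specifically, I would insert the intermediate quantities $P_{ki}P_{ij}v_j$ and $P_{ki}v_i$, and write
\begin{equation*}
P_{ki}P_{ij}P_{jk}v_k - v_k = \bigl(P_{ki}P_{ij}P_{jk}v_k - P_{ki}P_{ij}v_j\bigr) + \bigl(P_{ki}P_{ij}v_j - P_{ki}v_i\bigr) + \bigl(P_{ki}v_i - v_k\bigr).
\end{equation*}

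Next, I would use that $P_{ki}P_{ij}:T_{x_j}M\to T_{x_k}M$ and $P_{ki}:T_{x_i}M\to T_{x_k}M$ are isometries to rewrite the norms of the three blocks as single-edge discrepancies:
\begin{equation*}
\|P_{ki}P_{ij}(P_{jk}v_k-v_j)\|_{x_k}=\|P_{jk}v_k-v_j\|_{x_j},\qquad \|P_{ki}(P_{ij}v_j-v_i)\|_{x_k}=\|P_{ij}v_j-v_i\|_{x_i},
\end{equation*}
while the third term $\|P_{ki}v_i-v_k\|_{x_k}$ is already of the desired form. Applying the triangle inequality followed by the elementary bound $(a+b+c)^2\leq 3(a^2+b^2+c^2)$ then yields
\begin{equation*}
\|P_{ki}P_{ij}P_{jk}v_k-v_k\|_{x_k}^2\leq 3\bigl(\|P_{jk}v_k-v_j\|_{x_j}^2+\|P_{ij}v_j-v_i\|_{x_i}^2+\|P_{ki}v_i-v_k\|_{x_k}^2\bigr).
\end{equation*}

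Finally, I would sum over $(i,j,k)\in\{1,\ldots,N\}^3$. Each of the three summands on the right depends on only two of the three indices, so summing over the free index contributes a factor of $N$, and after relabeling dummy indices all three sums coincide with $N\sum_{i,j}\|P_{ij}v_j-v_i\|_{x_i}^2$. Adding the three equal contributions produces the factor $3\cdot 3N = 9N$ and delivers the claimed inequality. There is essentially no obstacle here; the only subtlety is identifying the correct telescoping that makes each block become a \emph{single}-edge relative velocity, and being careful with the index bookkeeping so that the free-index factor of $N$ appears exactly once in each of the three terms.
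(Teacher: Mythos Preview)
Your proposal is correct and follows essentially the same approach as the paper's proof: the same telescoping decomposition of $P_{ki}P_{ij}P_{jk}v_k-v_k$ into three single-edge discrepancies via the isometry property of parallel transport, followed by the triangle inequality, the bound $(a+b+c)^2\le 3(a^2+b^2+c^2)$, and summation over the free index to produce the factor $9N$.
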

\begin{proof}
	Although the statement of Lemma \ref{L4.2} can be written under more general setting, the proof is completely the same as in \cite{A-H-S}. Nevertheless, we here introduce a proof for the completeness.	Since each parallel transport $P_{ij}$ acts as an isometry between tangent space $T_{x_i}M$ and $T_{x_j}M$, we have 
	\begin{equation}\label{tri}
	\begin{aligned}
	&\|P_{ki}P_{ij}P_{jk}v_k- v_k\|_{x_k} \\
	&\hspace{0.5cm} \leq \|P_{ki}P_{ij}P_{jk} v_k-P_{ki}P_{ij} v_j\|_{x_k}+\|P_{ki}P_{ij} v_j-P_{ki} v_i\|_{x_k}+\|P_{ki} v_i-v_k\|_{x_k}\\
	& \hspace{0.5cm} =\|P_{jk} v_k-v_j\|_{x_j}+\|P_{ij}v_j-v_i\|_{x_i}+\|P_{ki}v_i-v_k\|_{x_k}.
	\end{aligned}
	\end{equation}
	Therefore, we take a sum of squares of \eqref{tri} for all $i,j,k$ to obtain
	\begin{align*}
	\begin{aligned}
	\sum_{i, j,k}\|P_{ki}P_{ij}P_{jk} v_k-v_k\|_{x_k}^2 &\leq 3\sum_{i, j,k} \left(\|P_{jk}v_k-v_j\|_{x_j}^2+\|P_{ij} v_j-v_i\|_{x_i}^2+\|P_{ki}v_i-v_k\|_{x_k}^2\right)\\
	&=9N\sum_{i, j}\|P_{ij} v_j-v_i\|_{x_i}^2.
	\end{aligned}
	\end{align*}
\end{proof}
In \cite{A-H-S}, the authors introduced two well-known properties on spherical trigonometry without proofs to show dichotomy in asymptotic patterns of CS particles. Here we present their corresponding results for the hyperbolic triangles on $\mathbb{H}^2$ to figure out the detailed asymptotic patterns of CS particles.
\begin{lemma}\label{L4.4} Let $\Delta$ be a hyperbolic geodesic triangle on $\mathbb{H}^2$, and we consider the hyperboloid model as a realization of $\mathbb{H}^2$. Denote $A,B$ and $C$ as three vertices of $\Delta$, and define the angles $b,c$ and $\angle A$ as 
	\[b=d(C,A),~c=d(A,B),~\angle A: \emph{interior angle of}~\Delta ~\emph{at vertex}~A.  \]
	Then, the following assertions hold:
	\begin{enumerate}
				\item (L'Huilier's formula for hyperboloid): 
			\[\tan\left(\frac{\emph{Area}(T)}{2}\right)=\frac{\tanh\displaystyle\frac{b}{2}\tanh\frac{c}{2}\sin \angle A}{1-\tanh\displaystyle\frac{b}{2}\tanh\frac{c}{2}\cos \angle A}. \]
				\item (Law of sines):
				\[\left|\sinh b\sinh c\sin \angle A\right|=\left|\det(OA,OB,OC)\right|. \]
			\end{enumerate}
\end{lemma}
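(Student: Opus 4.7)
For assertion (1), the plan is to combine the Gauss-Bonnet theorem with Napier's analogy for hyperbolic triangles. Since $\mathbb{H}^2$ has constant sectional curvature $-1$, Gauss-Bonnet applied to $\Delta$ gives the defect formula
\[ \operatorname{Area}(\Delta) = \pi - \angle A - \angle B - \angle C, \]
whence $\tan(\operatorname{Area}(\Delta)/2) = \cot\bigl((\angle A + \angle B + \angle C)/2\bigr)$. The key auxiliary identity is Napier's analogy for hyperbolic triangles,
\[ \tan\bigl((\angle B + \angle C)/2\bigr) = \frac{\cosh((b-c)/2)}{\cosh((b+c)/2)}\cot(\angle A / 2), \]
which is derived from the hyperbolic laws of cosines and sines via half-angle substitutions. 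Substituting Napier's analogy into the cotangent addition formula with $x = \angle A/2$ and $y = (\angle B + \angle C)/2$, and simplifying using
\[ \cosh\bigl((b \pm c)/2\bigr) = \cosh(b/2)\cosh(c/2) \pm \sinh(b/2)\sinh(c/2) \]
together with the tangent half-angle identities for $\sin \angle A$ and $\cos \angle A$, collapses the expression into the stated right-hand side after routine algebra.

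For assertion (2), the plan is to compute the Euclidean determinant $\det(OA, OB, OC)$ directly via the geodesic parameterization of Proposition \ref{P2.1}. Viewing $A, B, C$ as their images in $\mathbb{R}^{2+1}$ under $\iota$, I write
\[ B = \cosh(c)\,A + \sinh(c)\,u_B, \qquad C = \cosh(b)\,A + \sinh(b)\,u_C, \]
where $u_B, u_C$ are the unit tangent vectors at $A$ pointing along the unique geodesics toward $B$ and $C$ respectively. By Proposition \ref{P2.2} these vectors satisfy $\langle A, u_B\rangle_M = \langle A, u_C\rangle_M = 0$, $\langle u_B, u_B\rangle_M = \langle u_C, u_C\rangle_M = 1$, and $\langle u_B, u_C\rangle_M = \cos \angle A$. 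Multilinearity of the determinant yields
\[ \det(A, B, C) = \sinh(b)\sinh(c)\det(A, u_B, u_C). \]
For the remaining determinant, let $M=[A\mid u_B\mid u_C]$ and $G = \operatorname{diag}(-1,1,1)$; then $M^{\top} G M$ is the Gram matrix
\[ \begin{pmatrix} -1 & 0 & 0 \\ 0 & 1 & \cos \angle A \\ 0 & \cos \angle A & 1 \end{pmatrix}, \]
whose determinant equals $-\sin^2 \angle A$. Combining with $\det(M^{\top} G M) = (\det M)^2 \det G = -(\det M)^2$ gives $|\det(A, u_B, u_C)| = |\sin \angle A|$, and (2) follows.

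The main obstacle is the algebraic step in (1): Napier's analogy for hyperbolic triangles is not among the identities recorded in Section \ref{sec:2}, so it must be derived from the hyperbolic laws of cosines and sines by half-angle manipulation, which requires some care with signs. Part (2) is, by contrast, essentially formal once the geodesic parameterization and the Gram-matrix identity $\det(M^{\top} G M) = (\det M)^2 \det G$ are in place. It is the form in (2) that will enter the subsequent dichotomy theorem most directly, because $\det(OA, OB, OC) = 0$ characterizes coplanarity of $\{A, B, C\}$ with the origin of $\mathbb{R}^{2+1}$, which is precisely the degeneracy arising from asymptotic vanishing of hyperbolic areas under the \emph{a priori} flocking assumption.
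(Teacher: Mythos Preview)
Your argument for (1) is essentially the paper's: both combine the Gauss--Bonnet defect formula $\operatorname{Area}(\Delta)=\pi-\angle A-\angle B-\angle C$ with the identity
\[
\tan\frac{\angle A}{2}\,\tan\frac{\angle B+\angle C}{2}=\frac{\cosh\frac{b-c}{2}}{\cosh\frac{b+c}{2}},
\]
which you cite as Napier's analogy and the paper derives from the hyperbolic law of cosines via the half-angle formulas for $\sin^2(\angle A/2)$ and $\cos^2(\angle A/2)$; the remaining cotangent-addition algebra is then the same.

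For (2) you take a genuinely different route. The paper applies a Lorentz transformation $L\in O(1,2)$ to place $A$ at $(1,0,0)$ and then reads off $B$ and $C$ in explicit coordinates, computing the $3\times 3$ determinant directly and using $|\det L|=1$. You instead expand $B$ and $C$ along the geodesic rays from $A$ via Proposition~\ref{P2.1}, strip off the $\sinh b\,\sinh c$ factor by multilinearity, and evaluate $|\det(A,u_B,u_C)|$ through the Minkowski Gram matrix $M^{\top}GM$. Your method is coordinate-free and makes the role of the Minkowski form transparent (indeed it generalizes cleanly to higher-dimensional parallelepipeds), while the paper's method is more concrete and avoids the Gram-matrix step at the cost of invoking the isometry group. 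Both are short and correct; the only point to keep tidy in yours is the identification $\langle u_B,u_C\rangle_M=\cos\angle A$, which follows from Proposition~\ref{P2.2}(3) since $u_B,u_C$ are unit tangent vectors at $A$.
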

\begin{proof}
Since a proof is very lengthy and technical, we leave it in Appendix \ref{App-C}.
\end{proof}

Now, we are ready to present our third main result{, where the precise statement is given} as follows.

\begin{theorem}\label{T4.2}
	Let $Z=(x_1,\cdots, x_N)$ be a global smooth solution to \eqref{model} exhibiting asymptotic flocking \eqref{flocking}.
	Then, we have the following dichotomy:
	\begin{enumerate}
		\item The energy tends to zero:
		\[\lim_{t\to\infty}\mathcal{E}[t]=0.\]
		\item The energy converges to a positive value, and position configuration becomes coplanar asymptotically: 
		\[\lim_{t\to\infty}\mathcal{E}[t]>0,\quad \lim_{t \to \infty}\det\left(x_i(t)\Big|x_j(t)\Big|x_k(t)\right)=0,\qquad \forall~ i,j,k=1,\cdots,N. \]
	\end{enumerate}
\end{theorem}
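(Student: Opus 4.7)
The plan is to handle the dichotomy by exploiting the monotonicity of $\mathcal{E}$ and then analyzing the case in which the energy does not decay to zero. By Proposition~\ref{P4.1}, the nonnegative functional $\mathcal{E}$ is monotonically decreasing along the flow, so $L := \lim_{t\to\infty}\mathcal{E}[t]$ exists in $[0,\infty)$. When $L=0$, alternative (1) holds trivially, so the real task is to show that $L>0$ forces asymptotic coplanarity.

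Assume $L>0$. The flocking hypothesis \eqref{flocking} already builds in velocity alignment, so $\|P_{ij}v_j-v_i\|_{x_i}\to 0$ for all $i,j$. Since parallel transport is an isometry, this gives $|\|v_i\|_{x_i}-\|v_j\|_{x_j}|\to 0$, and combined with $\sum_i\|v_i\|_{x_i}^2/2\to L$, we deduce $\|v_i\|_{x_i}^2\to 2L/N>0$ for each $i$. The next step is to bring in the holonomy estimate. Lemma~\ref{L4.3} bounds
\[
\sum_{i,j,k}\|P_{ki}P_{ij}P_{jk}v_k-v_k\|_{x_k}^2\le 9N\sum_{i,j}\|P_{ij}v_j-v_i\|_{x_i}^2\longrightarrow 0,
\]
and Lemma~\ref{L4.2} rewrites each summand on the left as $\|v_k\|_{x_k}^2\bigl(2-2\cos(\mathrm{Area}(\Delta_{ijk}))\bigr)$, where $\Delta_{ijk}$ is the hyperbolic geodesic triangle $x_ix_jx_k$. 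Since $\|v_k\|_{x_k}^2\to 2L/N>0$, we conclude $\cos(\mathrm{Area}(\Delta_{ijk}))\to 1$. The hyperbolic Gauss-Bonnet identity forces $\mathrm{Area}(\Delta_{ijk})\in[0,\pi)$, so in fact $\mathrm{Area}(\Delta_{ijk})\to 0$.

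Finally, we convert vanishing area into vanishing determinant via Lemma~\ref{L4.4}. Writing $b=d(x_k,x_i)$, $c=d(x_i,x_j)$ and $\angle A$ for the interior angle at $x_i$, L'Huilier's formula gives
\[
\tan\!\Bigl(\tfrac{\mathrm{Area}(\Delta_{ijk})}{2}\Bigr)\Bigl(1-\tanh(b/2)\tanh(c/2)\cos\angle A\Bigr)=\tanh(b/2)\tanh(c/2)\sin\angle A.
\]
The flocking hypothesis gives $\sup_t d(x_i,x_j)\le D<\infty$, so $\tanh(b/2)\tanh(c/2)\le\tanh^2(D/2)<1$ and the bracketed factor on the left is uniformly bounded below by $1-\tanh^2(D/2)>0$. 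Combined with $\mathrm{Area}\to 0$ we obtain $\tanh(b/2)\tanh(c/2)\sin\angle A\to 0$. Using the identity $\sinh b\,\sinh c = 4\cosh^2(b/2)\cosh^2(c/2)\tanh(b/2)\tanh(c/2)$ together with the boundedness of $b,c$, we upgrade this to $\sinh b\,\sinh c\,\sin\angle A\to 0$, and the law of sines, Lemma~\ref{L4.4}(2), then yields $\det(x_i,x_j,x_k)\to 0$, as required.

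The main obstacle I expect is not in the structural argument but in handling degenerate configurations cleanly: when $x_i,x_j,x_k$ are already near-collinear the geodesic triangle $\Delta_{ijk}$ degenerates, and identities like L'Huilier's formula must be read in a limiting sense, but since both sides vanish together no essential difficulty arises. Care is also needed to apply Lemma~\ref{L4.2} uniformly over all (possibly coincident) triples $(i,j,k)$ and to confirm that the uniform bound $D$ on pairwise distances is propagated in $t$, which is exactly the content of \eqref{flocking}. Modulo these bookkeeping items the chain Proposition~\ref{P4.1} $\to$ Lemmas~\ref{L4.2}--\ref{L4.3} $\to$ Lemma~\ref{L4.4} delivers the claim.
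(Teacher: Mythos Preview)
Your proposal is correct and follows essentially the same route as the paper: monotonicity of $\mathcal{E}$ gives the limit, positivity of the limit plus velocity alignment forces all speeds to a common positive value, Lemmas~\ref{L4.2}--\ref{L4.3} then drive the triangle areas to zero, and Lemma~\ref{L4.4} converts this to vanishing determinants. Your treatment is in fact slightly more careful than the paper's in two places---you explicitly bound the denominator in L'Huilier's formula away from zero using the uniform diameter bound $D$, and you spell out the passage from $\tanh(b/2)\tanh(c/2)\sin\angle A\to 0$ to $\sinh b\,\sinh c\,\sin\angle A\to 0$---but the argument is otherwise the same.
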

\begin{proof}
	It follows from Proposition \ref{P4.1} that the energy always converges to a nonnegative value:
	\begin{equation}
	\exists~\mathcal{E}^\infty:=\lim_{t \to \infty}\mathcal{E}[t]\geq 0.
	\end{equation}
	Therefore, it suffices to show that
	\[\lim_{t \to \infty}\det(x_i(t)|x_j(t)|x_k(t))=0\quad \forall~ i,j,k=1,\cdots,N \]
	holds, whenever the asymptotic flocking \eqref{flocking} emerges and $\mathcal{E}^\infty$ is strictly positive. In this case, the speed of each particles converges to a common positive number $\sqrt{\frac{\mathcal{E}^\infty}{N}}$, since 
	\[\left|\|v_i\|_{x_i}-\|v_j\|_{x_j}\right|= \left|\|v_i\|_{x_i}-\|P_{ij} v_j\|_{x_i}\right|\leq \|P_{ij} v_j- v_i\|_{x_i}\to 0,\quad \forall ~i,j=1,\cdots,N. \]
	Now, we apply Lemma \ref{L4.2} and Lemma \ref{L4.3} to \eqref{flocking} to obtain 
	\[\lim_{t \to \infty} \sum_{i, j,k} \Big(2-2\cos(A_{ijk}(t)) \Big)=0, \]
	where $A_{ijk}(t)$ denotes the (length-minimizing) geodesic triangle with vertices $\{x_i,x_j,x_k \}$ at time $t$. Hence, we deduce $A_{ijk}(t)\to 0$ for each $i,j,k$, and we use Lemma \ref{L4.4} to obtain
	\[\lim_{t \to \infty}\tanh\bigg(\frac{d(x_i,x_j)}{2}\bigg)\tanh\bigg(\frac{d(x_i,x_k)}{2}\bigg)\sin\angle(x_j x_i x_k)\to 0,\quad \forall ~i,j=1,\cdots,N. \]
	Finally, we combine the above result with the boundedness of $d(x_i,x_j)$ and $d(x_i,x_k)$ from \eqref{flocking} to conclude the desired result:
	\[\left|\det\left(x_i(t)\Big|x_j(t)\Big|x_k(t)\right)\right|=\left|\sinh\bigg(\frac{d(x_i,x_j)}{2}\bigg)\sinh\bigg(\frac{d(x_i,x_k)}{2}\bigg)\sin\angle(x_j x_i x_k)\right|\to 0\quad \mbox{as}~~t\to 0. \]
\end{proof}

\subsection{Numerical simulation} \label{sec:4.4}

In this subsection, we present a numerical example of the HCS model on $\mathbb{H}^{2}$ for $N=10$ and compare them with our analytical results in Section \ref{sec:4}. 
We used the fourth-order Runge-Kutta method in all numerical simulations, {and we orthogonally projected $(x_i(n+1) ,v_i(n+1))$ to $T\mathbb{H}^d$ after each step.} Moreover, we fixed the following parameters: 
\[ \phi(x,y)=1,\quad N=10, \quad \kappa=1, \quad dt=0.001, \quad t = 200s. \]
Recall that according to Theorem \ref{T4.1} and Theorem \ref{T4.2}, we obtained the followings:

\begin{enumerate}
	\item Energy $\mathcal{E}[t]$ decrease monotonically.
	\item The velocity alignment always occur.
	\item If all distances between particles are bounded in $t$, then they approach to a geodesic of $\mathbb{H}^2$ unless $\mathcal{E}[t]\to 0$.
\end{enumerate}

\noindent We here show a $t$ vs $\log\mathcal{E}[t]$ plot and $t$ vs $\det(x_1|x_2|x_3)$ plot to see (1) and (3).
\begin{figure}[!h]
	\begin{center}
		\begin{subfigure}[b]{0.48\linewidth}
			{\includegraphics[width=1.0\textwidth]{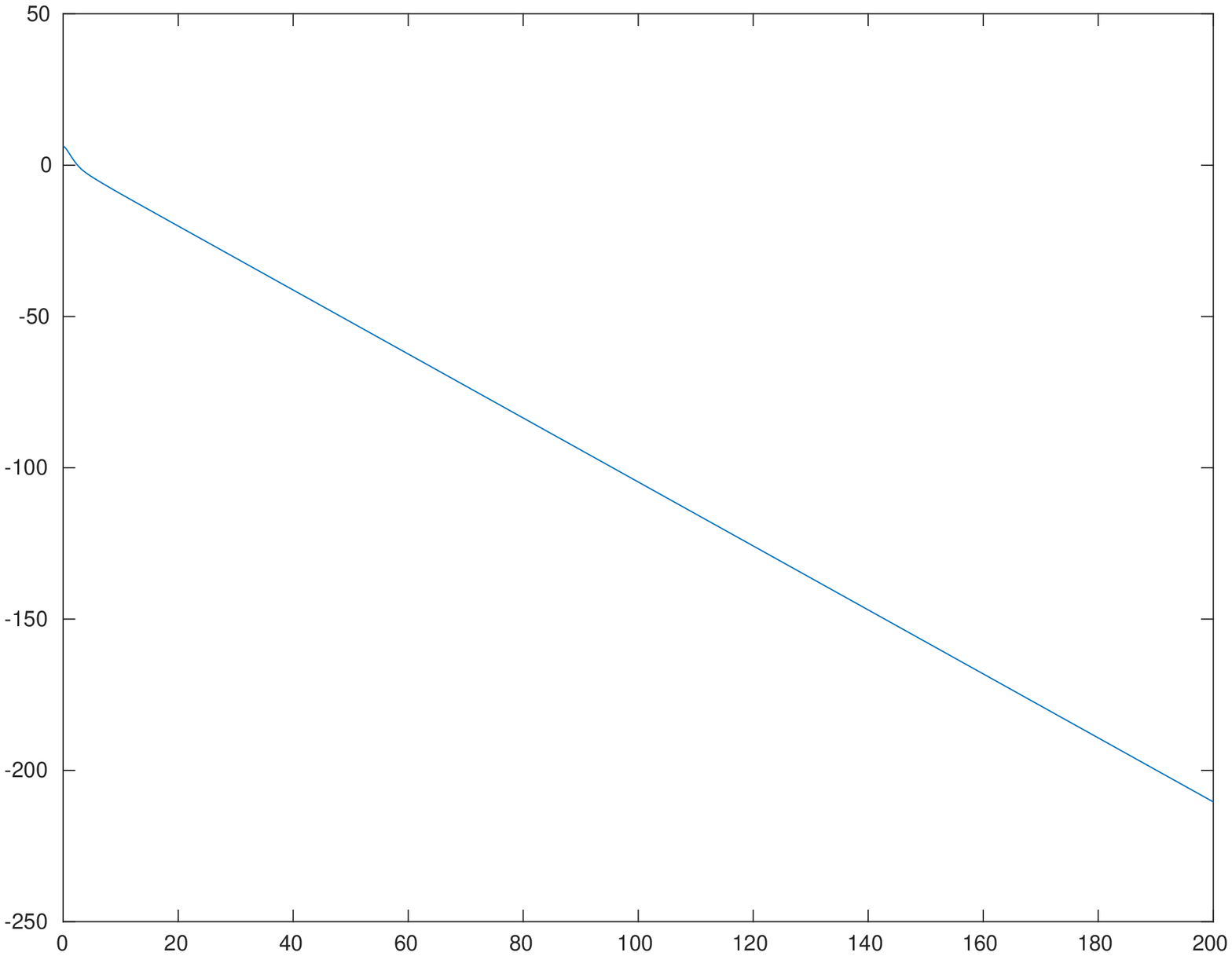}}
			\caption{Evolution of $\log\mathcal{E}[t]$.}
		\end{subfigure}
		\begin{subfigure}[b]{0.48\linewidth}
			{\includegraphics[width=1.0\textwidth]{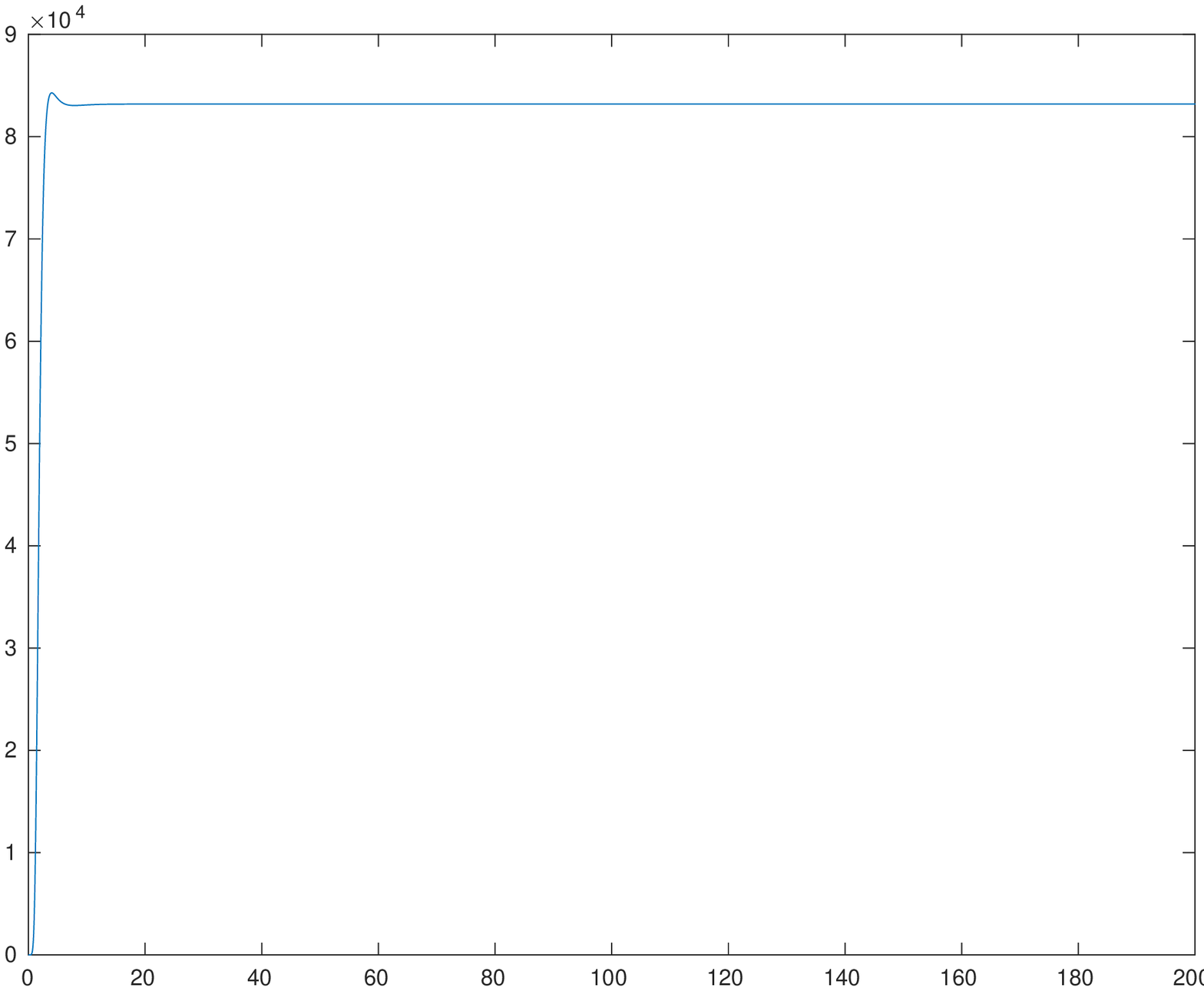}}
			\caption{Evolution of $\det(x_1|x_2|x_3)$.}
		\end{subfigure}
	\end{center}
	\figurename{ 1: Temporal evolution of $\log\mathcal{E}[t]$ and $\det(x_1|x_2|x_3)$. }
\end{figure}

Fig. 1(A) shows that the energy indeed converges to zero exponentially fast, and therfore each particle converged to a certain point which might not be located in the same geodesic. Since all particles converge, the determinant of first three particles also converges to a nonzero value as in Fig. 1(B).\\

{One considerable issue here} is that we could not find any other type of result in our countless trial {for random initial data}. In our simulations, the energy always converges to zero exponentially fast, and therefore the particles could not approach {further} until they aggregate to a single geodesic. Even if they start from any geodesic $\gamma$ with initial velocity slightly different from $\dot{\gamma}$, they show a similar dynamics. This indicates that the {coplanar} state in our dichotomy is indeed unstable and cannot emerges from generic initial data/system parameter. This is the contrasted difference with the unit $d$-sphere $\mathbb{S}^d$ in \cite{A-H-S}, since there was a flocking on $\mathbb{S}^d$ with nonzero asymptotic velocity, and therefore all particles approach to a single great circle asymptotically.  {Moreover, from our derivation of HK from HCS and the asymptotic stability results in \cite{H-K-R}, it might be possible to achieve asymptotic flocking with nonzero $\mathcal{E}^\infty$ if there is a common invariant geodesic along the flow. In our simulation on a geodesic, the energy $\mathcal{E} $ converged to a nonzero positive value, and the determinant remained zero for short time (until $t=50s$) because of numerical error and possibly the instability of coplanar state.}

\begin{figure}[!h]
	\begin{center}
		\begin{subfigure}[b]{0.48\linewidth}
			{\includegraphics[width=1.0\textwidth]{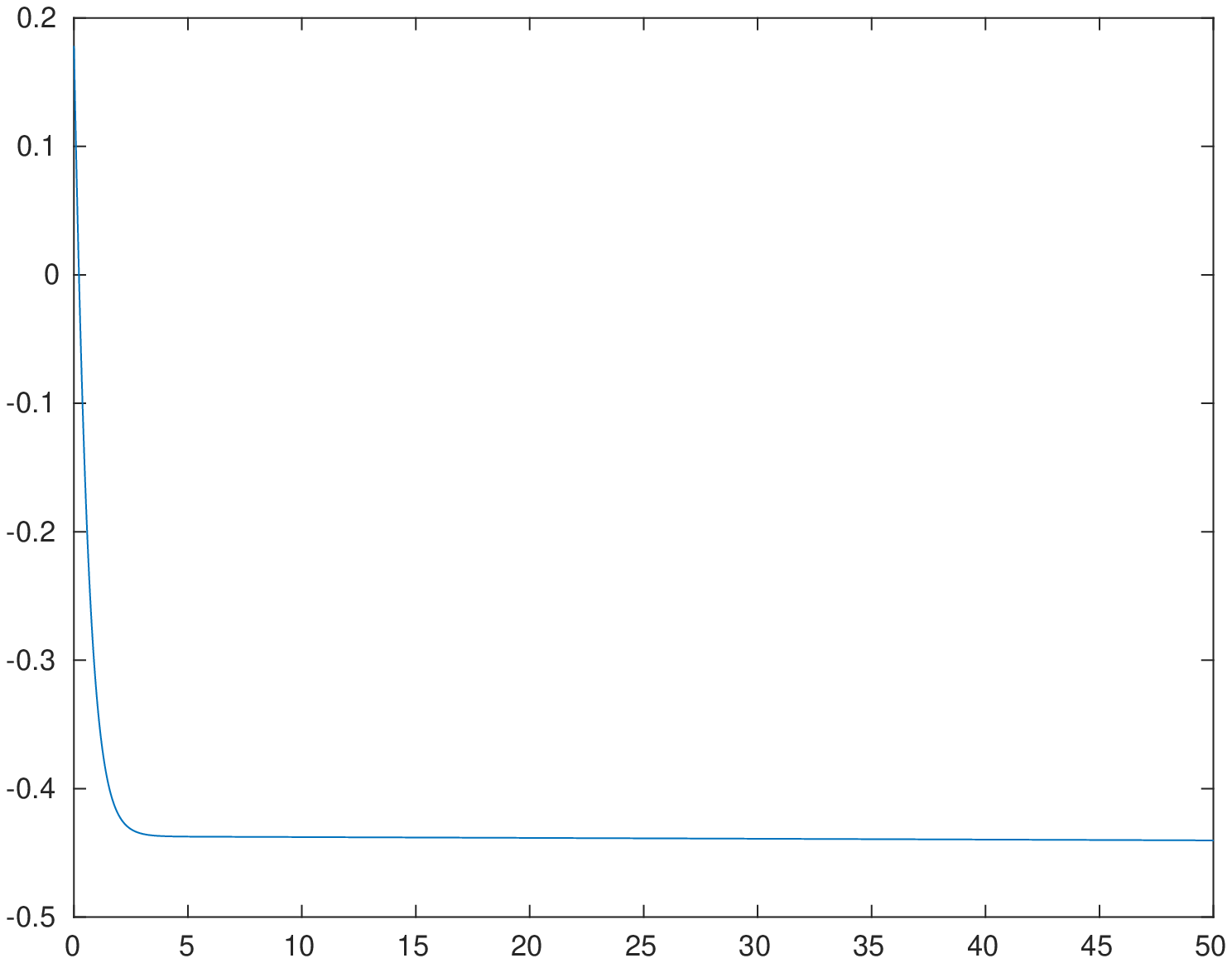}}
			\caption{Evolution of $\log\mathcal{E}[t]$ for HK.}
		\end{subfigure}
		\begin{subfigure}[b]{0.48\linewidth}
			{\includegraphics[width=1.0\textwidth]{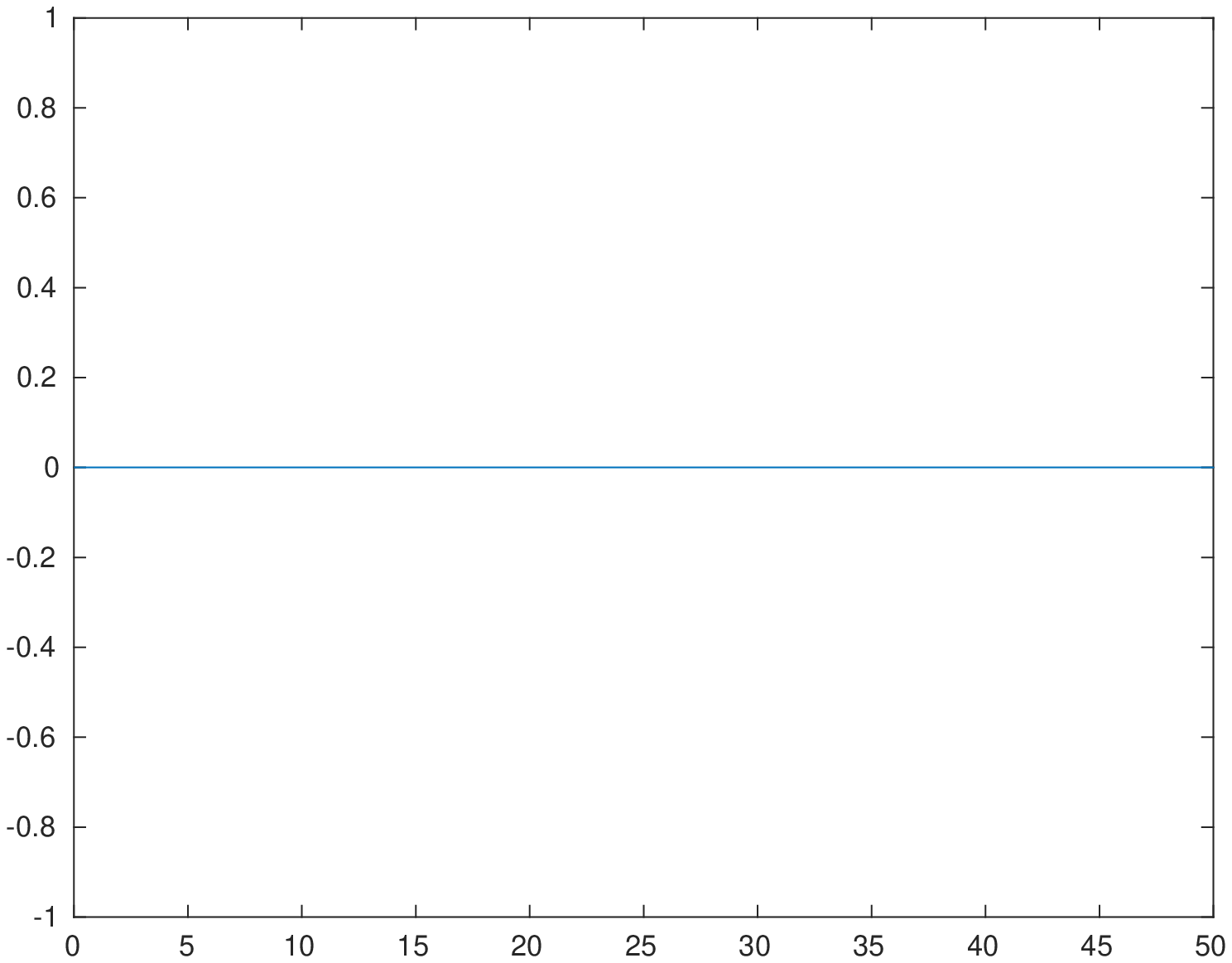}}
			\caption{Evolution of $\det(x_1|x_2|x_3)$ for HK.}
		\end{subfigure}
	\end{center}
	\figurename{ 3: Temporal evolution of $\log\mathcal{E}[t]$ and $\det(x_1|x_2|x_3)$ for HK.}
\end{figure}

\section{Conclusion} \label{sec:5}
\setcounter{equation}{0}
In this paper, we have studied emergent behaviors of the Cucker-Smale model on $\mathbb{H}^d$( in short the HCS model) whose general abstract form was already introduced in \cite{H-K-S} for generic complete Riemannian manifolds. For ${\mathbb H}^d$, we can explicitly express geodesics and parallel transport in terms of state variables. Thus, the explicit form of the HCS model can be derived. Thanks to this explicit representation of the HCS model, three ramifications were obtained in this paper. First, when all initial positions are restricted to a geodesic and initial velocities are tangent to the geodesic, geodesic becomes a positively invariant set for the HCS dynamics. In this case, we showed that the dynamics of the HCS model reduces to that of the hyperbolic Kuramoto model which was proposed in a recent work \cite{R-L-W}. Second, any pair of points on ${\mathbb H}^d$ admits a unique geodesic and thus the choice of communication weight $\psi$ can be done without any consideration of conjugate points as in $\mathbb{S}^d$. Then, for a bounded symmetric smooth communication function $\psi(\cdot,\cdot)$ with a positive lower bound, we were able to show the emergence of velocity alignment {\it without a priori condition,} which improves earlier a priori velocity alignment result in \cite{H-K-S}. Third, we also found a dichotomy on the asymptotic patterns of CS particles on ${\mathbb H}^2$. Thanks to dissipative structure of the velocity interaction terms, the total kinetic energy is monotonically decreasing and bounded below by zero, hence it converges to a nonnegative value asymptotically. When flocking occurs, there will be two possible asymptotic scenario, either the total kinetic energy converges to zero or the total energy converge a positive value and spatial positions lie on the same plane in ${\mathbb R}^3 \cap {\mathbb H}^2$ asymptotically.  \newline

There are several open remaining that were not investigated further in this paper. For example, in our velocity alignment estimate, we have assumed that the communication weight function does have a positive lower bound, so whether we can get rid of this positive lower bound condition or not will be an interesting question. Another interesting question will be identification of basins leading to the dichotomy. These issues will be addressed in a future work. 

\appendix

\newpage

\section{Proof of Proposition \ref{P2.1}} \label{App-A}
\setcounter{equation}{0}
In this appendix, we discuss the completeness of $(\mathbb{H}^d,g)$ and recover the well-known simple formula for geodesics on the hyperboloid model which provides a proof for Proposition \ref{P2.1}. \newline

First, for the canonical orthonomal basis $\left\{\frac{\partial}{\partial x^1}\Big|_{u},\cdots,\frac{\partial}{\partial x^d}\Big|_{u} \right\}$ of tangent space $T_u\mathbb{R}^d$ at the point $u=(u^1,\cdots,u^d)$, the metric tensor $g$ is designated as follows: for $i,j=1,\cdots,N,$
\vspace{0.2cm}
\begin{equation*}
\begin{aligned}
g_{ij}(u)&:=g\left(d\phi^{-1}\left(\frac{\partial}{\partial x^i}\Big|_{u}\right),~ d\phi^{-1}\left(\frac{\partial}{\partial x^j}\Big|_{u}\right) \right)=\delta_{ij}-\frac{u^iu^j}{1+\sum_m(u^m)^2},
\end{aligned}
\end{equation*}
\vspace{0.2cm}

\noindent and this designation uniquely determines an inner product in the tangent space $T_{\phi^{-1}(u)}\mathbb{H}^d$, since 
\begin{align}
\begin{aligned}\label{metric}
&g\left(d\phi^{-1}\left(a^i\frac{\partial}{\partial x^i}\Big|_{u}\right),~ d\phi^{-1}\left(a^j\frac{\partial}{\partial x^j}\Big|_{u}\right) \right) \\
&\hspace{1cm} = \sum_m(a^m)^2-\frac{\left(\sum_i a^iu^i\right)^2}{1+\sum_m(u^m)^2} \geq \sum_m(a^m)^2\left(1-\frac{\sum_m(u^m)^2}{1+\sum_m(u^m)^2}\right) \geq 0,
\end{aligned}
\end{align}
for every~$(a^1,\cdots,a^d)~\mbox{in}~ \mathbb{R}^d.$

In addition, the inverse matrix of $(g_{ij})$ is then given by
\begin{equation}\label{dual}
g^{ij}(u)=\delta_{ij}+u^iu^j,\quad  (g^{ij}):=(g_{ij})^{-1},
\end{equation}
which can be verified from the following simple calculations:
\[\begin{aligned}
g_{ij}g^{jk}&=\left(\delta_{ij}-\frac{u^iu^j}{1+\sum_m(u^m)^2} \right)\left(\delta_{jk}+u^ju^k \right)\\
&=\delta_{ik}+u^iu^k-\frac{u^iu^k}{1+\sum_m(u^m)^2}-\frac{\sum_m(u^m)^2}{1+\sum_m(u^m)^2}u^iu^k =\delta_{ik}.
\end{aligned} \]

\noindent Recall that for the Levi-Civita connection $\nabla$ compatible with the metric tensor $g$, the Christoffel symbols of the second kind $\Gamma_{ij}^k$ are given as
\begin{equation*}
\begin{aligned}
\Gamma_{ij}^k(u)=\frac{1}{2}g^{kl}(u)\left(\frac{\partial g_{li}}{\partial u^j}+\frac{\partial g_{jl}}{\partial u^i}-\frac{\partial g_{ij}}{\partial u^l} \right),
\end{aligned}
\end{equation*}
where 
\[\nabla_{\partial_i}\partial_j=\Gamma_{ij}^k\partial_k,\quad \partial_i:= d\phi^{-1}\left(\frac{\partial}{\partial x^i}\Big|_{u}\right).\]
More explicitly, we obtain
\begin{align*}
\frac{\partial}{\partial u^b}g_{ca}=\frac{\partial}{\partial u^b}\left(\delta_{ca}-\frac{u^c u^a}{1+\sum_m(u^m)^2}\right)=-\frac{\delta_{bc} u^a+u^c\delta_{ab}}{1+\sum_m(u^m)^2}+\frac{2u^cu^au^b}{\left(1+\sum_m(u^m)^2\right)^2}
\end{align*}
so that the detailed formula for $\Gamma_{ij}^k$ follows immediately:
\begin{equation}\label{Christoffel}
\begin{aligned}
\Gamma_{ij}^k(u)=&\frac{1}{2}g^{kl}(u)\left(\frac{\partial g_{li}}{\partial u^j}+\frac{\partial g_{jl}}{\partial u^i}-\frac{\partial g_{ij}}{\partial u^l} \right)\\
=&g^{kl}(u)\left(-\frac{u^l\delta_{ij}}{1+\sum_m(u^m)^2}+\frac{u^iu^ju^l}{\left(1+\sum_m(u^m)^2\right)^2}\right)\\
=&-\frac{g^{kl}(u)u^l}{1+\sum_m(u^m)^2}g_{ij}(u)\\
=&-u^kg_{ij}(u),\quad \forall~i,j,k=1,\cdots,N.
\end{aligned}
\end{equation}
Therefore, for every geodesic $\gamma(s)=\phi^{-1}(u^1(s),\cdots,u^d(s))$ of $(\mathbb{H}^d,g)$, we have
\begin{equation}\label{geod0}
\frac{d^2u^k}{ds^2}+\Gamma^k_{ij}\frac{du^i}{ds}\frac{du^j}{ds}=\ddot u^k-u^kg_{ij}(u)\dot u^i\dot u^j=\ddot u^k-u^kg(\dot\gamma,\dot\gamma)=0,\quad k=1,\cdots,d,
\end{equation}
where $g(\dot\gamma,\dot\gamma)$ is constant along each geodesic curve. From the uniqueness of ODE theory, we obtain 
\begin{equation}\label{geodesic1}
u(s)=\phi(\gamma(s))=u(0)\cosh cs+\frac{\dot u(0)}{c}\sinh cs,\quad c^2:=g(\dot\gamma,\dot\gamma).
\end{equation}
Finally, we consider the geodesic $\gamma(s)$ as a curve in $\mathbb{R}^{d+1}$:
\begin{align*}
\begin{aligned}
\gamma(s) &=\phi^{-1}(u^1(s), \cdots, u^d(s)) \\
&=\left(\sqrt{1+\sum_m(u^m)^2}, u^1, u^2, \cdots, u^d\right) =:\left(u^0(s), u^1(s), u^2(s), \cdots, u^d(s)\right).
\end{aligned}
\end{align*}
Then, we combine \eqref{geodesic1} and 
\[c^2=g(\dot\gamma,\dot\gamma)=\sum_m(\dot u^m(0))^2-\frac{\left(\sum_mu^m(0)\dot u^m(0)\right)^2}{1+\sum_m(u^m(0))^2} \]
to compute $u^0$ explicitly. Precisely, one has
\[\begin{aligned}
u^0(s)^2&=1+\sum_m(u^m(s))^2\\
&=1+\sum_m\left(u^m(0)\cosh cs+\frac{\dot u^m(0)}{c}\sinh cs\right)^2\\
&=1+\sum_m\left[\left(u^m(0)\right)^2\cosh^2 cs+\frac{2u^m(0)\dot u^m(0)}{c}\cosh cs\sinh cs+\frac{(\dot u^m(0))^2}{c^2}\sinh^2 cs\right]\\
&=1+\sum_m\left[\left(u^m(0)\right)^2\cosh^2 cs+\frac{2u^m(0)\dot u^m(0)}{c}\cosh cs\sinh cs\right]\\
&\hspace{0.7cm}+\sinh^2 cs+\frac{\left(\sum_mu^m(0)\dot u^m(0)\right)^2}{c^2(1+\sum_m(u^m(0))^2)}\sinh^2 cs
\end{aligned} \]
\[\begin{aligned}
&=\left(1+\sum_m \left(u^m(0)\right)^2\right)\cosh^2 cs+\frac{2}{c}\left(\sum_m u^m(0)\dot u^m(0)\right)\cosh cs\sinh cs\qquad\\
&\hspace{0.7cm}+\frac{\left(\sum_mu^m(0)\dot u^m(0)\right)^2}{c^2(1+\sum_m(u^m(0))^2)}\sinh^2 cs\\
&=\left[\sqrt{1+\sum_m \left(u^m(0)\right)^2}\cosh cs+\frac{\left(\sum_mu^m(0)\dot u^m(0)\right)}{c\sqrt{1+\sum_m \left(u^m(0)\right)^2}}\sinh cs\right]^2\\
&=\left[u^0(0)\cosh cs+\frac{\dot u^0(0)}{c}\sinh cs \right]^2.
\end{aligned} \]

\bigskip

\section{Proof of Proposition \ref{P2.4}} \label{App-B}
\setcounter{equation}{0}
\noindent In this appendix, we provide proofs for two assertions stated in Proposition \ref{P2.4}. \newline

\noindent $\bullet$~(Proof of the first assertion): First of all, we combine \eqref{geod0} and Proposition $\ref{P2.3}$ to obtain
		\begin{equation}\label{covar}
		a^i=\ddot{x}^i-g(\dot x,\dot x)x^i=\ddot{x}^i-\langle
		\dot{\bx},\dot{\bx}\rangle_M  x^i,\quad i=1,\cdots,d.
		\end{equation}
		On the other hand, by taking derivative twice in Proposition \ref{P2.3}, we have 
		\begin{equation}\label{ddot}
		\langle\bx,\ddot{\bx}\rangle_M +\langle\dot{\bx},\dot{\bx}\rangle_M \equiv 0.
		\end{equation}
		Now, we substitute the relation \eqref{ddot} to \eqref{covar} to get 
		\begin{equation}\label{Projection}
		a^i=\ddot{x}^i+\langle\bx,\ddot{\bx}\rangle_M  x^i,\quad i=1,\cdots,d.
		\end{equation}
		Finally, we use the uniqueness of $(d+1)$-dimensional vector $\mathbf{a}=(a^0,\cdots,a^d)$ satisfying \eqref{Projection} and $\langle\mathbf{a},\bx\rangle_M =0$ to conclude \eqref{covariant}.\\
		
		\noindent $\bullet$~(Proof of the second assertion):~It follows from the estimate \eqref{Christoffel} that the parallel vector field $u$ along $\gamma$ is given by the relation 
		\begin{equation}\label{paralcond}
		\dot{u}^k-g_{ij}(x)u^i\dot{x}^jx^k=\dot{u}^k-g(u,\dot{\gamma})x^k=0,\quad k=1,\cdots,d,
		\end{equation}
		where $\phi(\gamma(s))=(x^1(s),\cdots,x^k(s))$ and 
		\[u(s)=\sum_{i=1}^{d} u^i(s)\partial_i=\sum_{i=1}^{d} u^i(s)d\phi^{-1}\left(\frac{\partial}{\partial x^i}\Big|_{\phi(\gamma(s))}\right).\]

		Now, let us denote the right-hand side of \eqref{parallel} as $P\bv:=P(\bv;\bp,\bq)$. Then, we can easily check $\langle P\bv,\bq\rangle_M =0$ from the following simple calculations:
		\[\begin{aligned}
		\langle P\bv,\bq\rangle_M =\langle\bv,\bq\rangle_M +\frac{\langle\bv,\bq\rangle_M }{1-\langle\bp,\bq\rangle_M }\left(\langle\bp,\bq\rangle_M +\langle\bq,\bq\rangle_M  \right)=\langle\bv,\bq\rangle_M -\langle\bv,\bq\rangle_M =0.
		\end{aligned} \] 
		On the other hand, the derivative of mapping
		\begin{equation*}
		s\mapsto P\left(\bv;\bp,\Gamma(s) \right)\in \mathbb{R}^{d+1}
		\end{equation*}
        can be obtained as 
		\begin{align*}
		\begin{aligned}
		&\frac{d}{ds}\left[\bm{v}+\frac{\langle \bm{v},\Gamma(s) \rangle_M }{1-\langle \bm{p}, \Gamma(s) \rangle_M }\big(\bm{p}+\Gamma(s)\big)\right]\\
		& \hspace{0.5cm} =~\frac{d}{ds}\left[\bm{v}+\frac{\langle \bm{v}, \dot{\Gamma}(0) \rangle_M \sinh(s)}{1+\cosh(s)}\left(\bm{p}+\cosh(s)\bm{p} + \sinh(s)\dot{\Gamma}(0)\right)\right]\\
		&\hspace{0.5cm} = ~\langle \bm{v}, \dot{\Gamma}(0) \rangle_M \cdot\frac{d}{ds}\left(\bm{p}\sinh(s)+(\cosh(s)-1)\dot{\Gamma}(0) \right)\\
		&\hspace{0.5cm} = ~\langle \bm{v}, \dot{\Gamma}(0) \rangle_M \cdot\left(\bm{p}\cosh(s)+\sinh(s)\dot{\Gamma}(0)\right)\\
		&\hspace{0.5cm} = ~\langle \bm{v}, \dot{\Gamma}(0) \rangle_M \cdot\Gamma(s),
		\end{aligned}
		\end{align*}
		where we used Proposition \ref{P2.2} and Proposition \ref{P2.4}. Moreover, the product 
		\[\left\langle P\left(\bv;\bp,\Gamma(s) \right), \dot{\Gamma}(s) \right\rangle_M  \]
		is conserved along the geodesic $\gamma$, since 
		\[\begin{aligned}
		&\frac{d}{ds}\left\langle P\left(\bv;\bp,\Gamma(s) \right), \dot{\Gamma}(s) \right\rangle_M \\
		&\hspace{0.5cm}=\left\langle \frac{d}{ds}P\left(\bv;\bp,\Gamma(s) \right), \dot{\Gamma}(s) \right\rangle_M +\left\langle P\left(\bv;\bp,\Gamma(s) \right), \ddot{\Gamma}(s) \right\rangle_M \\
		&\hspace{0.5cm}=\langle \bm{v}, \dot{\Gamma}(0) \rangle_M \cdot \left\langle \Gamma(s),\dot{\Gamma}(s)\right\rangle_M +\Big\langle P\left(\bv;\bp,\Gamma(s) \right), {\Gamma}(s)\Big\rangle_M \\
		&\hspace{0.5cm}=0,
		\end{aligned} \]
		where we used Proposition \ref{P2.2} to deduce $\ddot{\Gamma}(s)={\Gamma}(s)$ in the second equality. Therefore, the mapping 
		\begin{equation*}
	P\bv(s)=(p^0(s),\cdots,p^d(s)):=P\left(\bv;\bp,\Gamma(s) \right)
	\end{equation*}
	is a unique mapping satisfying 
	\[\sum_{i=0}^{d}p^0\frac{\partial}{\partial x^i}\Big|_{\iota(\gamma(s))}\in d\iota(T_{\gamma(s)}\mathbb{H}^d),\quad \frac{d}{ds}P\bv(s)=\langle P\bv(s), \dot{\Gamma}(s)\rangle_M  \cdot \Gamma(s), \]
	and the corresponding tangent vector $\displaystyle\sum_{i=1}^{d}p^i(s)d\phi^{-1}\left(\frac{\partial}{\partial x^i}\Big|_{\phi(\gamma(s))}\right)$ on $T_{\gamma(s)}\mathbb{H}^d$ satisfies the equation \eqref{paralcond} for $u$. Finally, we use the uniqueness of solution of \eqref{paralcond} to conclude $P\bv=\bu$.

\vspace{0.5cm}

\section{Proof of Lemma \ref{L4.4}} \label{App-C}
\setcounter{equation}{0}
In this appendix, we provide a complete proof for the two assertions in Lemma \ref{L4.4}.  \newline

Let $\Delta$ be a hyperbolic geodesic triangle on $\mathbb{H}^2$, and consider the hyperboloid model as a realization of $\mathbb{H}^2$. Recall that $A,B$ and $C$ are three vertices of $\Delta$, and  $b,c$ and angle $\angle A$ are defined as 
	\[b=d(C,A),~c=d(A,B),~\angle A: \emph{interior angle of}~\Delta ~\emph{at vertex}~A.  \]

\noindent $\bullet$~(Proof of the first assertion):~First, we introduce two well known facts on the hyperbolic trigonometry without proof:
\begin{enumerate}
	\item If the interior angle of $\Delta$ are given as $\angle A, \angle B$ and $\angle C$, the area of $\Delta$ can be written as
	\begin{equation}\label{area}
	\text{Area}(\Delta)=\pi-(\angle A+\angle B+\angle C),
	\end{equation}
which is a direct consequence of Gauss-Bonnet theorem.
	\item Let $a$ be the geodesic distance between $B$ and $C$. Then, there is a relation between $a,b,c$ and $\angle A$:
	\begin{equation}\label{cosine}
	\cosh a=\cosh b\cosh c-\sinh b\sinh c\cos \angle A,
	\end{equation}
	which is called as the hyperbolic law of cosines.
\end{enumerate}
\vspace{0.1cm}

		From now on, let us denote $s$ as the semiperimeter $s:=\frac{a+b+c}{2}$. Then, the law of cosines \eqref{cosine} can be reduced to the following two equivalent relations:
		\begin{equation*}
		\begin{aligned}
		\sin^2\Big(\frac{\angle A}{2}\Big)=&\frac{\sinh b\sinh c-\cosh b\cosh c+\cosh a}{2\sinh b\sinh c}=\frac{\sinh(s-c)\sinh(s-b)}{\sinh b\sinh c},
		\end{aligned}
		\end{equation*} and
		\begin{equation*}
		\cos^2\Big(\frac{\angle A}{2}\Big)=\frac{\sinh b\sinh c+\cosh b\cosh c+\cosh a}{2\sinh b\sinh c}=\frac{\sinh(s)\sinh(s-a)}{\sinh b\sinh c}.
		\end{equation*}
		
		\vspace{0.2cm}
		\noindent Then, these two relations yield
		\vspace{0.2cm}
		\begin{equation*}
		\begin{aligned}
		\cos\Big(\frac{\angle B+\angle C}{2}\Big)=&\sqrt{\frac{\sinh^2 s\sinh (s-b)\sinh (s-c)}{\sinh^2 a\sinh b\sinh c}}-\sqrt{\frac{\sinh^2 (s-a)\sinh (s-b)\sinh (s-c)}{\sinh^2 a\sinh b\sinh c}}\\
		=&\sin\Big(\frac{\angle A}{2}\Big)\frac{\sinh s}{\sinh a}-\sin\Big(\frac{\angle A}{2}\Big)\frac{\sinh (s-a)}{\sinh a}.
		\end{aligned}
		\end{equation*}
		Hence, we divide the above relation by $\sin\Big(\frac{\angle A}{2}\Big)$ do deduce
		\begin{equation*}
	\frac{\cos\Big(\frac{\angle B+\angle C}{2}\Big)}{\sin\Big(\frac{\angle A}{2}\Big)}=\frac{\sinh s-\sinh (s-a)}{\sinh a}=\frac{2\cosh\Big(\frac{b+c}{2}\Big)\sinh\Big(\frac{a}{2}\Big)}{\sinh a}=\frac{\cosh\Big(\frac{b+c}{2}\Big)}{\cosh(\frac{a}{2})},
		\end{equation*}
		and by using similar argument, one has
		\begin{equation*}
		\frac{\sin\Big(\frac{\angle B+\angle C}{2}\Big)}{\cos\Big(\frac{\angle A}{2}\Big)}=\frac{\cosh\Big(\frac{b-c}{2}\Big)}{\cosh(\frac{a}{2})}.
		\end{equation*}
		Therefore, we obtain
		\begin{equation}\label{tan}
		\tan\Big(\frac{\angle A}{2}\Big)\tan\Big(\frac{\angle B+\angle C}{2}\Big)=\frac{\cosh \Big(\frac{b-c}{2}\Big)}{\cosh \Big(\frac{b+c}{2}\Big)}=\frac{1-\tanh(\frac{b}{2})\tanh(\frac{c}{2})}{1+\tanh(\frac{b}{2})\tanh(\frac{c}{2})}.
		\end{equation}
		On the other hand, since the area of $\Delta$ can be represented by the interior angles, we have 
		\begin{equation*}
		\begin{aligned}
		\tan\left(\frac{Area(\Delta)}{2}\right) &=\cot\Big(\frac{\angle A+\angle B+\angle C}{2}\Big) =\frac{1-\tan\Big(\frac{\angle A}{2}\Big)\tan\Big(\frac{\angle B+\angle C}{2}\Big)}{\tan\Big(\frac{\angle A}{2}\Big)+\tan\Big(\frac{\angle B+\angle C}{2}\Big)}\\ &=\frac{\tan\Big(\frac{\angle A}{2}\Big)\Big(1-\tan\Big(\frac{\angle A}{2}\Big)\tan\Big(\frac{\angle B+\angle C}{2}\Big)\Big)}{\tan^2\Big(\frac{\angle A}{2}\Big)+\tan\Big(\frac{\angle A}{2}\Big)\tan\Big(\frac{\angle B+\angle C}{2}\Big)}.
		\end{aligned}
		\end{equation*}
		Therefore, as we substitute \eqref{tan} into the above relation, we get
		\begin{equation*}
		\begin{aligned}
		\tan\left(\frac{Area(\Delta)}{2}\right)&=\frac{2\tan(\frac{\angle A}{2})\tanh(\frac{b}{2})\tanh(\frac{c}{2})}{\tan^2\Big(\frac{\angle A}{2}\Big)+\tan^2\Big(\frac{\angle A}{2}\Big)\tanh(\frac{b}{2})\tanh(\frac{c}{2})+1-\tanh(\frac{b}{2})\tanh(\frac{c}{2})}\\
		&=\frac{2\tan(\frac{\angle A}{2})\tanh(\frac{b}{2})\tanh(\frac{c}{2})}{\sec^2\Big(\frac{\angle A}{2}\Big)+\tan^2\Big(\frac{\angle A}{2}\Big)\tanh(\frac{b}{2})\tanh(\frac{c}{2})-\tanh(\frac{b}{2})\tanh(\frac{c}{2})}\\
		&=\frac{\sin(\angle A)\tanh(\frac{b}{2})\tanh(\frac{c}{2})}{1+\sin^2\Big(\frac{\angle A}{2}\Big)\tanh(\frac{b}{2})\tanh(\frac{c}{2})-\cos^2\Big(\frac{\angle A}{2}\Big)\tanh(\frac{b}{2})\tanh(\frac{c}{2})}\\
		&=\frac{\sin(\angle A)\tanh(\frac{b}{2})\tanh(\frac{c}{2})}{1-\cos (\angle A)\tanh(\frac{b}{2})\tanh(\frac{c}{2})}=\frac{\sinh(\frac{b}{2})\sinh(\frac{c}{2})\sin (\angle A)}{\cosh(\frac{b}{2})\cosh(\frac{c}{2})-
			\sinh(\frac{b}{2})\sinh(\frac{c}{2})\cos (\angle A)},
		\end{aligned}
		\end{equation*}
		which is the desired result. \newline
		
		\noindent $\bullet$~(Proof of the second assertion) By taking left multiplication of proper Lorentz group $L\in O(1,2)$ if necessary, we may assume $LA=(1,0,0)$, $LB=(\cosh c,0,\sinh c)$ and  \[LC=(\cosh b, \sinh b\sin\angle A, \sinh b\cos\angle A). \]
		
\noindent Then, the determinant of matrix $\left(LA|LB|LC\right)$ satisfies
		\begin{align*}
		\Bigg|\det\begin{bmatrix}
		1&\cosh c&\cosh b\\0&0&\sinh b\sin(\angle A)\\0&\sinh c&\sinh b\cos(\angle A)
		\end{bmatrix}\Bigg|= \left|\sinh b\sinh c\sin \angle A\right|.
		\end{align*}
		Finally, since the determinant of $L$ is either $1$ or $-1$, we conclude the desired second assertion.

\bibliographystyle{amsplain}

\begin{thebibliography}{10}
	\bibitem{A-B} Acebron, J. A., Bonilla, L. L., P\'{e}rez Vicente, C. J. P., Ritort, F. and Spigler, R.: \textit{The Kuramoto model: A simple paradigm for synchronization phenomena.} Rev. Mod. Phys. {\bf 77} (2005), 137-185.
	
	\bibitem{A-H-S} Ahn, H., Ha, S.-Y. and Shim, W.: \textit{Emergent dynamics of a thermodynamic Cucker-Smale ensemble on complete Riemannian manifolds.} Submitted.
	
%
%
	\bibitem{A-B-F} Albi, G., Bellomo, N., Fermo, L., Ha, S.-Y., Kim, J., Pareschi, L., Poyato, D. and Soler, J.: \textit{Vehicular traffic, crowds, and swarms: from kinetic theory and multiscale methods to applications and research perspectives.} Math. Models Methods Appl. Sci. {\bf 29} (2019), 1901-2005.

	\bibitem{A-R} Aeyels,  D. and Rogge, J.: \textit{Stability of phase locking and existence of entrainment in networks of globally coupled oscillators.} Prog. Theor. Phys. {\bf112} (2004), 921-941.
	
	\bibitem{B-C-C} Ballerini, M., Cabibbo, N., Candelier, R., Cavagna, A., Cisbani, E., Giardina, I., Lecomte, V., Orlandi, A.,
Parisi, G., Procaccini, A., Viale, M. and Zdravkovic, V.: \textit{Interaction ruling animal collective behavior depends on topological rather than metric distance: evidence from a field study.} Proc. Natl. Acad. Sci. USA {\bf 105} (2008), 1232--1237.

	\bibitem{B} Barbălat, I.: \textit{Syst\`emes d\'equations diff\'erentielles d oscillations non Lin\'eaires,} Rev. Math. Pure Appl. {\bf 4} (1959), 267-270.
%
	\bibitem{B-C-C} Bolley, F., Canizo, J. A. and Carrillo, J. A.: \textit{Stochastic mean-field
limit: non-Lipschitz forces and swarming.} Math. Mod. Meth. Appl. Sci. {\bf 21} (2011), 2179-2210.

	
	\bibitem{B-B} Buck, J. and  Buck, E.: \textit{Biology of synchronous flashing of fireflies}. Nature {\bf 211} (1966), 562-564.
	
	\bibitem{C-F-R-T} Carrillo, J. A., Fornasier, M., Rosado, J. and Toscani, G.: \textit{Asymptotic
flocking dynamics for the kinetic Cucker-Smale model.} SIAM J. Math. Anal.  {\bf 42} (2010), 218--236.

%
%
%
%
\bibitem{C-H-L} Choi, Y.-P., Ha, S.-Y. and Li, Z.: \textit{Emergent dynamics of the Cucker-Smale flocking model and its variants.} In N. Bellomo, P. Degond, and E. Tadmor (Eds.), Active Particles Vol.I - Theory, Models, Applications(tentative title), Series: Modeling and Simulation in Science and Technology, Birkhauser-Springer.

%
%
	\bibitem{C-D} Cucker, F. and Dong, J.-G.: \textit{Avoiding collisions in flocks.} IEEE Trans. Automatic Control {\bf 55}, 1238--1243 (2010).
	
	\bibitem{C-D1} Cucker, F. and Dong, J.-G.: \textit{On the critical exponent for flocks under hierarchical leadership.} Math. Models Methods Appl. Sci. {\bf 19} (2009), 1391-1404.

\bibitem{C-S1} Cucker, F. and Smale, S.: \textit{On the mathematics of emergence}. Japan. J. Math. {\bf 2}, 197--227  (2007).
	
\bibitem{C-S2} Cucker, F. and Smale, S.: \textit{Emergent behavior in flocks}. IEEE Trans. Automat. Control {\bf 52}, 852--862 (2007).
	
	\bibitem{D-M1} Degond, P. and Motsch, S.: \textit{Macroscopic limit of self-driven particles with orientation interaction.}  C.R. Math. Acad. Sci. Paris {\bf 345} (2007), 555-560.

\bibitem{D-M2} Degond, P. and Motsch, S.: \textit{Large-scale dynamics of the persistent Turing Walker model of fish behavior.} J. Stat. Phys. {\bf 131} (2008), 989-1022.

\bibitem{D-M3} Degond, P. and Motsch, S.: \textit{Continuum limit of self-driven particles with orientation interaction.} Math. Mod. Meth. Appl. Sci. {\bf 18} (2008), 1193-1215.

\bibitem{DS2019} Dietert H., Shvydkoy, R.: \textit{On Cucker-Smale dynamical systems with degenerate communication.} {Analysis and Applications.} (2020) (DOI) https://doi.org/10.1142/S0219530520500050.

%
%
%

\bibitem{D-Q} Dong, J.-G. and Qiu, L.: \textit{Flocking of the Cucker-Smale model on general digraphs.} {\it IEEE Trans. Automat. Control} {\bf 62} (2017), 5234-5239.

\bibitem{D-F-T} R. Duan, M. Fornasier and G. Toscani, A kinetic flocking model with diffusion, {\it Commun. Math. Phys.} {\bf 300} (2010), 95--145.

	\bibitem{D-B1} D\"{o}rfler, F. and Bullo, F.: \textit{Synchronization in complex networks of phase oscillators: A survey. } Automatica {\bf 50} (2014), 1539-1564.
	
%
%
	
	\bibitem{H-H-K-K-M}  Ha, S.-Y., Hwang, S., Kim, D., Kim, S.-C. and Min, C.: \textit{Emergent behaviors of a first-order particle swarm model on the hyperboloid}. J. Math. Phys. {\bf 61} (2020), 042701.
	
%
%
%
	
	\bibitem{H-K-P-Z} Ha, S.-Y., Kim, J., Park, J. and Zhang, X.: \textit{Complete cluster predictability of the Cucker-Smale flocking model on the real line.} Arch. Ration. Mech. Anal. {\bf 31} (2019), 319-365.
	
	\bibitem{H-K-Rug} Ha, S.-Y.,  Kim, J., Ruggeri,T.: \textit{Emergent behaviors of thermodynamic Cucker-Smale
		particles}. SIAM J. Math. Anal. {\bf 50} (2019),  3092–3121.
	
	\bibitem{H-K-Rug2}  Ha, S.-Y., Kim, J., Ruggeri, T.:
	\textit{From the Relativistic Mixture of Gases
		to the Relativistic Cucker–Smale Flocking}.  
	Arch. Ration. Mech. Anal. {\bf 235} (2020), 1661–1706. 
	
	\bibitem{H-K-R} Ha, S.-Y., Ko, D. and Ryoo, S. W.: \textit{Emergent dynamics of a generalized Lohe model on some class of Lie groups}. J. Stat. Phys. {\bf 168} (2017), 171-207.
	
	\bibitem{H-K-S} Ha, S.-Y., Kim, D. and Schl\"oder, F.W.: \textit{Emergent behaviors of Cucker-Smale flocks on Riemannian manifolds}. Preprint.
		

\bibitem{H-Liu} Ha, S.-Y. and Liu, J.-G.: \textit{A simple proof of Cucker-Smale flocking dynamics and mean field limit.} Commun. Math. Sci. {\bf 7} (2009), 297-325.
	
	\bibitem{H-R} Ha, S.-Y., Ruggeri,T.: \textit{Emergent dynamics of a thermodynamically consistent particle model}. Arch. Ration. Mech. Anal. {\bf 223} (2017), 1397-1425.
	
%
	\bibitem{H-T} Ha, S.-Y. and Tadmor, E.: \textit{From particle to kinetic and hydrodynamic description of flocking.} Kinetic Relat. Models {\bf 1} (2008), 415-435.

	\bibitem{Ji} Jin, C.: \textit{Flocking of the Motsch-Tadmor model with a cut-off interaction function.} J. Stat. Phys. {\bf 171} (2018), 345-360.

	\bibitem{Ku2} Kuramoto, Y.: \textit{International symposium on mathematical problems in mathematical physics}. Lecture Notes Theor. Phys. {\bf 30} (1975), 420.
	
	\bibitem{L-X}  Li, Z. and Xue, X.: \textit{Cucker-Smale flocking under rooted leadership with fixed and switching topologies.} SIAM J. Appl. Math. {\bf 70} (2010), 3156--3174.

\bibitem{M-T} Motsch, S. and Tadmor, E.: \textit{A new model for self-organized dynamics and its flocking behavior.} J. Stat. Phys. {\bf 144} (2011), 923--947.

\bibitem{MT2014} Mostch, S. and Tadmor, E.: \textit{Heterophilious dynamics enhances consensus.} SIAM REV. {\bf 56} (2014), 577-621.
	
	\bibitem{Pe} Peskin, C. S.: \textit{Mathematical aspects of heart physiology}. Courant Institute of Mathematical Sciences, New York, 1975.
	
	\bibitem{PV17} Pignotti, C. and Vallejo, I. R.: \textit{Flocking estimates for the Cucker-Smale model with time lag and hierarchical leadership.} 
 J. Math. Anal. Appl. {\bf 464} (2018), 1313-1332. 

\bibitem{P-S} Poyato, D and Soler, J.: \textit{Euler-type equations and commutators in singular and hyperbolic limits of kinetic Cucker-Smale models.} Math. Mod. Meth. Appl. Sci. {\bf 6} (2017), 1089-1152.

	\bibitem{P-R} Pikovsky, A., Rosenblum, M. and Kurths, J.: \textit{Synchronization: A universal concept in
		nonlinear sciences}. Cambridge University Press, Cambridge, 2001.

	\bibitem{R-L-W} Ritchie, L.M., Lohe, M.A. and Williams, A.G.: \textit{Synchronization of relativistic particles in the hyperbolic Kuramoto model.} Chaos. {\bf  28} (2018), 053116.
	
	\bibitem{St} Strogatz, S. H.: \textit{From Kuramoto to Crawford: exploring the onset of synchronization in populations of coupled oscillators.} Phys. D {\bf 143} (2000), 1-20.
	
	\bibitem{T-T} Toner, J. and Tu, Y.: \textit{Flocks, herds, and schools: A quantitative theory of flocking.} {\it Phys. Rev. E} {\bf 58} (1998), 4828-4858.

    \bibitem{T-B} Topaz, C. M. and Bertozzi, A. L.: \textit{Swarming patterns in a two-dimensional kinematic model for biological groups.} {\it SIAM J. Appl. Math.} {\bf 65} (2004), 152-174.

	\bibitem{Wi2} Winfree, A. T.: \textit{Biological rhythms and the behavior of populations of coupled oscillators.} J. Theor. Biol. {\bf 16} (1967), 15-42.
	
	\bibitem{Wi1} Winfree, A. T.: \textit{The geometry of biological time}. Springer, New York, 1980.
	
	
\end{thebibliography}

\end{document}